\DeclareMathAlphabet{\mathpzc}{OT1}{pzc}{m}{it}
\algrenewcommand\alglinenumber[1]{\scriptsize #1:}
\definecolor{mygray}{gray}{0.6}
\newcommand{\pr}[1]{\mathrm{Pr}#1} 
\newcommand{\mc}[1]{\mathcal{#1}}
\newcommand{\ms}[1]{\mathsf{#1}}
\newcommand{\remove}[1]{}
\newcommand{\ceil}[1]{\lceil #1 \rceil}
\newcommand{\floor}[1]{\lfloor #1 \rfloor}
\providecommand{\eqref}[1]{(\ref{#1})}
\newtheorem{definition}{Definition}
\newtheorem{lemma}{Lemma}
\newtheorem{theorem}{Theorem}
\newtheorem{construction}{Construction}
\newcommand{\x}{\mathbf{x}}
\newcommand{\X}{\mathbf{X}}
\newcommand{\y}{\mathbf{y}}
\newcommand{\Y}{\mathbf{Y}}
\newcommand{\z}{\mathbf{z}}
\newcommand{\Z}{\mathbf{Z}}
\newcommand{\mg}{\color{magenta}}
\title{
 Robust and Reusable 
 Fuzzy Extractors for Low-entropy Rate Randomness Sources 
}
\author{\IEEEauthorblockN{Somnath Panja}
\IEEEauthorblockA{University of Calgary, Canada}
\and
\IEEEauthorblockN{Shaoquan Jiang}
\IEEEauthorblockA{University of Windsor, Canada}
\and
\IEEEauthorblockN{Reihaneh Safavi-Naini}
\IEEEauthorblockA{University of Calgary, Canada}
}
\begin{document}

\maketitle

\begin{abstract}
Fuzzy  extractors (FE) 
are cryptographic primitives that extract reliable  cryptographic key from  
noisy real world random sources such as biometric sources.
%
%
The FE generation algorithm takes a source sample, extracts a key and generates some {\em helper} data that will be used by the reproduction algorithm to recover the key. Reusability of FE 
guarantees
that security holds when 
FE is used 
multiple times  with the same source, 
and robustness of FE 
requires tampering with  the helper data 
be detectable.
\remove{
Fuzzy extractors have been studied in
information theoretic 
and computational setting.
}

%
In this paper, we consider information theoretic FEs, 
define a strong notion of reusability, 
and propose 
 {\em strongly robust and reusable FEs} (srrFE) that provides  the strongest combined  notion of reusability and robustness for FEs. 
We give two constructions, one for reusable FEs and one 
 for srrFE 
with information theoretic (IT) security for structured sources. 
\remove{Constructions 1 and 2 provide information theoretic (IT) security, in our proposed model.}
%
%
 The constructions are for structured sources and use {\em sample-then-lock} approach.  We discuss each construction and show their unique properties in relation to existing work. 
 
 Construction 2 is the  first 
 robust and reusable FE  with IT-security without assuming random oracle. The robustness is achieved by using an IT-secure MAC  with security against key-shift attack, which can be of independent interest.

\end{abstract}

\begin{IEEEkeywords}
   Cryptography, Information theoretic key establishment, Fuzzy extractor, Reusable and robust fuzzy extractor 
\end{IEEEkeywords}


\section{Introduction}\label{sec:intro}
%
%
Secret key establishment (SKE)  is a fundamental problem in cryptography.  A general model of SKE in information theoretic setting is  when Alice, Bob and Eve have  random variables   $X$, $Y$,
and $Z$, respectively, with a joint probability distribution $P_{XYZ}$, where $Z$ summarizes Eve's information about $X$ and $Y$. Maurer~\cite{Maurer1993}  
proved that in this setting the secret key entropy is upper-bounded by $\min [I(X;Y ), I ( X;Y |Z)]$ (ignoring small constants) and  so  key establishment is possible only if $X$ and $Y$ are correlated. 
There is a large body of research on general setting 
when Alice and Bob can communicate over a public authenticated channel.
 A special case of this general  setting is  called {\em Fuzzy
Extractors} (FEs)  \cite{DodisORS08} setting,  in which 
Alice and Bob's variables are   samples of a ``noisy"  source, and $d(x,y)\leq t$, where $d$ is a distance metric. 
Eve does not have any initial information about $X$ and $Y$, and so $Z$ is null.

\remove{
{\em Fuzzy
extractors} (FEs)  \cite{DodisORS08}, 
establish a shared key between two parties, Alice and Bob, who have access to two 
``close" samples of a random source, where closeness is with respect to some distance metric,  and can communicate over a public channel.
}

The  randomness source in  FE  (generating $X$and $Y$) naturally occurs in sampling biometrics and behavioral data  \cite{Daugman2004,karakaya2019using,islam2021scalable}, and so FE has been widely studied for biometric sources and in particular for  biometric based authentication.
  In this paper we consider IT-secure FEs.

An FE   has 
a pair of randomized  algorithms, $(Gen, Rep)$ that works as follows:  
$Gen$ takes Alice's sample  $w$ and generates a pair $(R,P)$ where $R$ is the secret key and  $P$  is  
 the {\em helper data} that will be sent to Bob, allowing  him  to use  
it in the $Rep$ 
algorithm  together with his 
sample  $w'$ that is ``close" to $w$, 
 and reproduce the same secret key $R$.

{\em Correctness 
and security} are two basic properties of FEs. {\em Correctness} of FE 
requires that  for a pair $(R,P)$ that is the output of the $Gen$ algorithm, the $Rep$ algorithm  recover $R$ from $w'$ and $P$ with 
a probability at least $1-\epsilon$, where probability is over the 
randomness of the $Gen$ algorithm.
 {\em Security} of FE requires that
  the key $R$  be 
  indistinguishable from a uniformly random string of the same length, given the adversary's view of the communication (that is $P$).


\noindent
{\bf Reusability and Robustness.} To use FE in practice, two additional properties are required.

{\em Reusability} of FE  \cite{boyen2004reusable} considers security of FE output (randomness of $R$) 
when  the 
 source is used multiple times, and {\em robustness} requires detection of tampering of the helper data.

\vspace{0.5mm}
Reusability   is important when the same source is used multiple times. 
For example, biometric  scans of a user
are used for enrolments to multiple organizations. 
%
 Reusability is defined by a  game between a {\em challenger} and  the {\em adversary Eve}, where Eve can 
ask (query) the challenger  to provide the output of the $Gen$ algorithm on   
source samples $w^1, w^2, \cdots w^\eta$, 
and receive 
the output of the $Gen$ algorithm, either the full $(R,P)$ or $P$ only, to them.  Resuability requires that 
the value  $R$ for an output of  $Gen$  algorithm that is not directly given to Eve, remains indistinguishable from random.
Robustness is defined by allowing the same type of queries by Eve, however 
the goal of the adversary is   to tamper with the helper data (e.g. 
so that Bob
recovers a different key).

Variations of reuseability and robustness definitions capture different 
powers of Eve.
\remove{
These variations 
can be grouped  using 
Eve's allowed type of queries (queried samples) 
(R1), and the information that  is 
returned to Eve (R2),  each further refined as follows.
}
For a secret sample $w$,  two types of queries have been considered. In   
{\bf (R1.1)},  Eve  can only  choose  a shift value $d^i$, and receive the output of $Gen$ on $w^i = w+d^i$ 
 \cite{boyen2004reusable}. 
 (In Boyen et al's definition of reusability,  Eve can choose an arbitrary  distortion function that is applied on $w$.   Known constructions 
 are only for
shift function.) 
In
{\bf (R1.2)} 
Eve is allowed to choose  samples  that are arbitrarily correlated with $w$ 
\cite{Canetti2020}. 
In both cases $w^i$ must be in ``close" distance of $w$ for $Rep$ to succeed.

Reusability notions also vary by the type of information that Eve receives in response to their queries. In
{\bf (R2.1)}  Eve only receives  
$P^i$ (the public output of the $Gen$ on $w^i$), and in
{\bf (R2.2)}  Eve is given the full  output of $Gen$ on $w^i$, that is $(P^i, R^i)${\footnote{In~\cite{boyen2004reusable}, {\it insider security model} allows an adversary to obtain $\tilde{R}^i \gets Rep(\tilde{P}^i,w+d^i)$, where $\tilde{P}^i$ and $d^i$ are chosen by the adversary.\label{ftnotereussaabliltyboyen}}}.

\vspace{0.5mm}
{\em Robustness}  is also defined using similar variations in Eve's queries and   received responses. 
{\it Pre-application robustness} and {\it post-application robustness}, respectively, correspond to the case that
Eve has access to $P$ only (R2.1)  or  sees $(P,R)$ (R2.2),   before modifying $P$ and 
generating $P'$ ($\neq P$)  that must be accepted by the $Rep$ algorithm ~\cite{dodis2006robust}.



{\em Robust and reusable FE (rrFE)}  require 
that robustness  hold over multiple applications of FE.
In $\eta$ robust and resuable FE,  Eve can query the $Gen$ algorithm on $\eta$  samples, 
and succeeds if it can modify  one of the $P_i$'s  without being detected by the $Rep$ algorithm.
Different flavours of reusuability and robustness that are defined above can be straightforwardly extended to the case of $\eta$ samples.

Wen et al.\cite{WenLAsiaCr18,WenLH18,Wen2019} considered  rrFE model when Eve can  only choose samples that are  shifts  of $w$ (R1.1).

\noindent
{\bf Fuzzy Extractor Constructions}. 
 Fuzzy extractors have been primarily studied for min-entropy sources where the min-entropy of the source is $\beta$.

  An established approach to construct an FE 
is to use a {\em secure sketch} algorithm 
that generates a {\em helper} string that does not loose ``too much" entropy, but allows $w$ to be recovered by Bob.
Alice and Bob both obtain the same key by applying an extractor (specified by a random seed) to obtain the same shared key.
This approach is referred to as  {\em sketch-and-extract} paradigm.  Boyen~\cite{boyen2004reusable} gave  two  general constructions of reusable FEs both using this approach and employing error correcting codes.  The first construction uses a randomly selected hash function from a  class of pairwise independent hash functions to extract randomness from the sample $w$. The second construction uses a  cryptographic hash to extract randomness from the sample $w$, and is modelled as a random oracle in the proof.

\remove{
There are also specific approaches that can be applied in special cases, for example \cite[Construction 2]{Canetti2020} that is used for large alphabets and is not reusable.
}

Sketch-and-extract approach however does not work for  {\em low-entropy sources} that are too ``noisy". That is  the ratio of the min-entropy of a sample to its size is small, and the distance  
between $w$ and $w'$ is large. In this case the $Rep$ algorithm will fail  because the helper data will loose all the sample's entropy.   Canetti et al.~\cite[Proposition 1]{Canetti2020} showed that (for binary sources) to produce non-zero key length, the min-entropy of a  sample must be at least $t\log_2\frac{n}{t}$
where $n$ is the sample size and 
$t$ is the upperbound on $d(w,w')$,
 the error in the sample that must be corrected.
 %
 %
 It was shown that an example of low entropy sources  can be  
 obtained by sampling iris data \cite{SimhadriSF19}.

Canetti et al.~\cite{Canetti2020}
constructed a {\em computationally secure reusable FE} using 
a new  approach, called {\em Sample-then-Lock}, that works for 
a class of {\em structured sources}  that is called {\em sources with high-entropy samples}.  The construction uses an idealized cryptographic primitive that is called {\em digital locker} and in the construction in \cite{Canetti2020}, is instantiated by a hash function.
The proof of security is in the random oracle model and abstract the hash function as a {\em random oracle}.
%
The approach works for  structured sources in which random subsamples of a sample  have sufficiently high min-entropy.

\subsection{Our work}
  We define IT-secure  {\em strongly robust and reusable FE (srrFE)}, and construct first, a  {\em  reusable FE (Construction~\ref{const1})  }
 and then  extend it to an {\em IT-secure srrFE (Construction~\ref{const2:cca})}. 
  We use the term `strongly'   
  to  emphasize 
  that Eve's queries can be arbitrarily correlated with $w$ (R1.2),  and reusability is in the sense of R2.2. 
 The srrFE is the first and the only known construction of an IT secure srrFE in {\em standard model} and without assuming 
 random oracles \cite{Canetti2020}.
 
 %
%

Both constructions are for {\em structured sources that are called    $(\alpha, m,N)$ conditional entropy sources} (Definition \ref{defn:condsource}). In these sources 
  random sub-samples of length $m$ of a sample $w$, conditioned on substrings of length $N$ of $w$, have at least $\alpha$ bits of  entropy. 
For $N=0$ we obtain the structured source that was used in \cite{Canetti2020}. 
%
{\em 
Robustness} is  achieved in {\em CRS  (Common Randomness String) model}. In this model, parties have access to 
 a public uniformly   random string that is independent of the source sample. 
\remove{of the source and its distribution. This model is   also used  in other known constructions of  robust and reusable FE 

by using a  key-shift secure MAC to ensure that Eve cannot exploit linearity of the helper data. 
}
CRS model had also  been used in 
 all known rrFEs \cite{WenLAsiaCr18,Wen2019,Liu2021}  that are 
   computationally  secure, and so one expects the model when the adversary is more powerful (information theoretic). 

%
%
 {\em Overview of the construction.}  
We  use a {\em strong average case  extractor} to extract uniform randomness from subsamples of the source (that are guaranteed to have sufficient min-entropy) and 
use  them as the pad in a One-Time-Pad (OTP) to encrypt, 
a randomly chosen key $R$. Alice uses $\ell$ subsamples and encrypts $R$   $\ell$ times.   Bob uses the corresponding  subsamples of $w'$ (position bits of subsamples are sent to Bob) to recover $R$. Successful $Rep$ requires at least one pair of subsamples in $w$  and $w'$ to be identical.
The parameter $\ell$ is selected to guarantee that this happens with high probability.

 Our security proof of FE effectively shows  encryption security of the 
 composition of multiple OTP of the   message ($R$); that is using the extracted randomness 
 from  the $\ell$  subsamples that are correlated,  
and using them in $\ell$ times OTP of the same message $R$ does not leak any information about $R$ to Eve. 

{\bf Construction 2} is based on Construction 1, and is  the first IT-secure {\em strongly robust and reusable fuzzy extractor (srrFE)} in standard model.
 The construction uses the CRS to allow Alice and Bob to share the same bit indexes for subsamples.
 Robustness is achieved by using a MAC (Message Authentication Code)  tag that is calculated on the set of $\ell$ ciphertexts, and is constructed using an  IT-secure one-time MAC algorithm.  
 The MAC must satisfy {\em key-shift security} 
 and protect  
 against an  adversary that can modify both the message and the key of the MAC. 
We use 
the MAC in 
\cite{rrFE2023}  and for completeness prove its security  in 
Appendix \ref{pf:lemmamac}.
Theorem \ref{THM:FUZZYEXTOTCONST2} proves that the composition of our rFE and the MAC is an srrFE.
Comparison of our constructions with existing works is in Table~\ref{table:comparison}.

{\bf Communication and computation cost.} 
In Construction~\ref{const1}, the helper data size 
is 
 %
{\footnotesize $\sim ( \ell \cdot (\alpha + 2 - 2\log(\frac{2\ell}{\sigma}) + m\ceil{\log(n)}))$} 
bits for  reusable FE, where $\ell$ is the number of subsamples 
used in the construction, 
$\sigma$ is the advantage of the adversary in the reusability game (Definition~\ref{def:rfuzzyext}), and $m, \alpha$ and $N$, are 
parameters of $(\alpha, m,N)$ conditional entropy sources  (Definition~\ref{defn:condsource}). 

%
In Construction~\ref{const2:cca},      the CRS  and the helper data sizes are {\footnotesize $\sim (\ell m\ceil{\log(n)} + m)$} and {\footnotesize$\sim [(\ell \cdot (\alpha + 2 - 2\log(\frac{2\ell}{\sigma})) + \mbox{(length of MAC tag})]$}   respectively.

Both constructions are computationally efficient. Construction~\ref{const1} requires 
$\sim \ell$
computations of a universal hash function that is used as a strong average case  extractor,   and  Construction~\ref{const2:cca} has the same computation with an additional MAC tag computation. 

 The only other reusable FEs with IT security are 
two general constructions due to Boyen \cite{boyen2004reusable} and 
the relationship between $m, n$ and $t$, the
min-entropy, sample size and distance bound, needs instantiation of the building blocks. 
There is no other IT-secure srrFE.

\remove{Boyen~\cite{boyen2004reusable} first introduced the notion of reusable FE and provided a construction for general source (without making any assumption other than entropy of the source) that, when uses an $[n,k,d]$-linear code with $t=\floor{\frac{d-1}{2}}$,  
requires the entropy of the source to be at least $(n-k)$ ~\cite[Sectin 6.5]{boyen2004reusable}, 
where $n$ is the length of the source $w$ and $dist(w,w') <= t$. 
Since their reusable FE construction is for general sources, the required entropy of the source must be greater than $t\log(\frac{n}{t})$ ~\cite[Proposition 1]{Canetti2020}. 
The helper data size of their construction is the sum of output size of secure sketch and size of the random number used in the secure sketch. Our construction can be useful even for source entropy less than $t \log(\frac{n}{t})$ , provided that the source exhibits additional structure, but the helper data size is approximately $(\ell \cdot \{\textsf{output key length} + 128\}$ bits for reusable FE~\ref{const1} and $\{(\ell \cdot \{\textsf{output key length} + 128\} + 128\}$ bits for robustly reusable FE~\ref{const2:cca}, $\ell$ is the number of locker (or iteration) needed in the construction. 
  Our reusable FE and robustly reusable FE may be more computational expensive than its computational setting counterparts. This is the cost of constructing reusable and robustly reusable fuzzy extractors for low entropy rate sources in information-theoretic setting avoiding all the impossibility results. An example of possible parameter sets including the required entropy for structured source ($\alpha$), the number of lockers ($\ell$), upper bound on the distance between two samples ($dist(w,w') \le t$) and length of extracted key etc., for our constructions are given in Appendix~\ref{parametersets}. Note that we propose the first reusable and robustly reusable FE constructions in information-theoretic setting. 
  }

\vspace{-.3em}
\subsection{Related work}
\label{ap:related}
\vspace{-.5em}
Fuzzy extractors have been  extensively studied  
  in  information theoretic \cite{dodis2006robust,boyen2004reusable,boyen2005secure,fuller2020fuzzy} and computational settings \cite{Canetti2020,WenLH18,AponCEK17,WenLAsiaCr18,Wen2019,feng2021computational,Apon2022}. 
 Table \ref{table:comparison} summarizes properties  of  IT-secure   schemes that are directly related to our work. 

\vspace{-.5em}
\begin{table}[!ht]
\footnotesize
 \begin{center}
\begin{tabular}{  p{1.6cm}  c c c c c c }
\hline
FE Scheme & Distr & ReUse 
& QuCor
& Rbst
& 
Sec
\\
\hline
\cite{DodisORS08}  & minEnt  & $-$ & $-$ & $-$ & Std 
\\
\cite{boyen2004reusable}  & minEnt  & R2.1 & Shift (R1.1)& $-$ & Std 
\\
\cite{boyen2004reusable}  & minEnt & R2.2\footref{ftnotereussaabliltyboyen} & Shift (R1.1) & $-$ & 
RO\\  
\cite{boyen2005secure}  & minEnt  & $-$ & $-$ & R2.1 & 
RO\\
\cite{dodis2006robust,CramerDFPW08}  & minEnt  & $-$ & $-$ & R2.2 & Std 
\\
\cite[Constr. 3]{Canetti2020}  & lAlph  & $-$ & $-$ & $-$ & Std \\
Construction~\ref{const1}  & Strct  & R2.2 & Arbitrary (R1.2) & $-$ & Std 
\\
 Construction~\ref{const2:cca}  & Strct  & R2.2 & Arbitrary (R1.2) & R2.2 & Std  
 \\
 \hline
\end{tabular}
\end{center}
\caption{\footnotesize \textrm{
Constructions of information theoretic fuzzy extractor. 
Columns are defined as: 
Distr:  Required source 
distribution;
of the source in the  scheme and its type;
ReUse: Reusability type;
 QuCor: Query correlation type;
 Rbst: Robustness type;
 Sec: 
 Security 
 model;
 minEnt: Bound on  source min-entropy;
 lAlph:  
 Large alphabet source;
 Strct: Structured source;
 Std: 
 Standard model; 
 RO: 
 Random Oracle model;
 ``$-$'': 
 not achieved; Constr. 3 denotes Construction 3 in~\cite{Canetti2020}. 
%
}
}
\vspace{-2.3em}
\label{table:comparison}
\end{table}

  Canetti et al. \cite[Proposition 1]{Canetti2020}  gave the lower-bound $\sim t\log(\frac{n}{t})$  on the required sample entropy of  FE when the distance between two samples is bounded by $t$, when the $Rep$ algorithm receives the correct helper data.
\remove{for  key establishment to be successful, assuming 
the $Rep$  algorithm has access to correct helper data, or equivalently there is a public authenticated channel between Alice who runs $Gen$ algorithm, and Bob who uses $Rep$ algorithm to recover the key.
}
 To achieve robustness in FE however, the min-entropy  of a sample of size $n$, must be at least $n/2$ ~\cite[Section  5]{dodis2009non}.
 \remove{however it is proved that to achieve robustness, which allows key establishment to succeed by detecting tampering with the helper data, 
the min-entropy  of a sample of size $n$, must be at least $n/2$.
}
The entropy requirement of robust FE is inline with known results for key establishment in  $P_{XYZ}$ setting.  
In \cite{MaurerW03b} it was proved that when Alice and Bob share an $n$-bit string, key extraction in presence of an active adversary is possible if (R\'enyi) entropy of the shared string is lower bounded by $2n/3$ bit. 
This result was later improved to $n/2$ \cite{dodis2006robust}, matching robust FE result in \cite{dodis2009non}.  The bounds are for general sources and do not apply to our 
constructions 
that are for structured sources.

 \remove{
privacy aIt was proved that key establishment in this setting, assuming $P_{XYZ} = (P_{X'Y'Z'})^n$ is $n$ independent repetitions of a binary distribution 
 $P_{X'Y'Z'}$, requires $X$ to have $2n/3$ bit {\color{blue}R\'enyi} entropy\footnote{ R\'enyi entropy of $X$ over $\mathcal{X}$,  {\scriptsize$H_2(X)=-\log(\sum_{X \in \mathcal{X}}(P(X))^2)$}, which is always greater than or equal to min-entropy of $X$.}~\cite{MaurerW03b} {\color{blue} when $X=Y$}. This result has been later improved to $n/2$ \cite{dodis2006robust} that matches the robust FE result in \cite{dodis2009non}.

Our constructions show that by assuming sufficient structure one can obtain 
reusable fuzzy extractor that run in polynomial time.

{\em Entropy requirement of random sources for information theoretic SKE.}  In \cite{MaurerW03b} it is proved that when Alice and Bob share a string of length $n$  bit,  key extraction without an authenticated channel requires the entropy to be 
greater than $\frac{2n}{3}$. This bound was later reduced to $\frac{n}{2}$ \cite{dodis2006robust}, which is the same as entropy requirement of robust FE~\cite[Section  5]{dodis2009non}.

 Our construction bypasses this bound by assuming  structured sources and access to CRS.
}

\remove{

{\color{blue}{\bf Relation with Information-theoretic key agreement.} A related setting is information-theoretic key agreement which was first introduced by Maurer~\cite{Maurer1993} and Ahlswede~\cite{Ahlswede1993} (independently) 
in what is known as the 
{\em source model}. In this model,  Alice and Bob have samples  of two correlated random variables $\mathbf{X} $ and $\mathbf{Y}$ that are distributed according to
$P_{\X\Y\Z}$ and  are partially leaked to Eve through the variable $\Z $, where the RVs $\X$, $\Y$ and $\Z$ are over $\mathcal{Z}^n$ and $\mathcal{Z}$ is the alphabet set. While the  probability distribution
$P_{\X\Y\Z}$ is public, the concrete samples $\x$, $\y$ and $\z$ are private to Alice, Bob and Eve, respectively. 
There is a long line of research on deriving fundamental results on the possibility  of secret key agreement, bounds on rate and capacity of information theoretic key agreement in this model and its variations, 
and providing constructions for    optimal (capacity achieving) systems~\cite{holenstein2005one,holenstein2006strengthening,renes2013efficient,Chou2015a}, together with  the finite  length analysis of the constructions \cite{holenstein2006strengthening,sharif2020}.
FE setting can be seen as a special case of the source model 
where  $\x $  and  $\y $ are samples  of the same source with a guaranteed upper bound on the distance between the two samples, $dist(\x,\y) \le t$, and either there is no initial information leakage to the adversary ($\Z =0$), or the conditional min-entropy (of $\X$ given $\Z$) $H_\infty{(\X|\Z)}$ is publicly known. 
}

{\color{blue}{\bf Prior works and impossibility results.} Maurer and Wolf's~\cite{MaurerW03b} construction of extractor (non-fuzzy) requires the entropy of the source to be at least $\frac{2n}{3}$, where $n$ is the length of the source in bit. This result was improved by Dodis et. al~\cite{dodis2006robust} who gave a construction of extractor (non-fuzzy) that requires the entropy of the source to be at least $\frac{n}{2}$. 
Moreover, Dodis et al.~\cite[Section  5]{dodis2009non} showed that robust fuzzy extractors are only possible for general sources if entropy is greater than half of its length. Our constructions uses structured sources, and its entropy could  be less than half of its length. Our work on structured sources avoids this impossibility result, and thus opens up new  directions to future research work on information-theoretic fuzzy extractors. 
  Fuller~\cite{Fuller2023} proved that to construct an efficient information-theoretic fuzzy extractor, one must either limit to sources with high {\em fuzzy min-entropy}~\cite{fuller2020fuzzy} or use specific properties of the source beyond {\em fuzzy min-entropy}. As mentioned earlier, for our constructions, we focus on the specific properties of the source. 
  Furthermore, Feng et al~\cite[Section 4]{feng2021computational} showed that robust fuzzy extractor for general sources are possible only if entropy is more that half of its length even in the common random string (CRS) model, where the  CRS is dependent on the source. Our robust fuzzy extractor uses structured sources along with a CRS model that are independent of the source.}

}

\vspace{-.5em}
\section{ Background  and Definitions
} 
\label{sec:pre}
\vspace{-0.5em}
\noindent
{\em Notations.}
 Upper-case letters (e.g., $X$) refer to random variables (RVs), and lower-case letters (e.g., $x$) 
 denote their realizations. $\mathrm{P}_X$ denotes the probability distribution 
  of an RV 
	   $X$. 
    All logarithms are base $2$ unless specified. 
The  \emph{min-entropy} of an RV $X \in \mathcal{X}$ with distribution $\mathrm{P}_X$  
is 
$H_{\infty}(X)= -\log (\max_{x} (\mathrm{P}_X({x})))$, and 
the \emph{average conditional min-entropy}  \cite{DodisORS08} of  RV $X$ given RV $Y \in \mathcal{Y}$ is defined as,
$\tilde{H}_{\infty}(X|Y)= -\log \mathbb{E}_{{y} \leftarrow  \mathcal{Y}}\max_{{x} \in \mathcal{X}}\mathrm{P}_{X|Y}({x}|{y}).$
The \emph{statistical distance} between two RVs $X$ and $Y$ with the same domain $\mc T$ is given by:

\hspace{.25in} ${\rm \Delta}(X,Y)=\frac{1}{2} \sum_{v\in {\cal T}} |\Pr[X=v]-\Pr[Y=v]|.$


We use $U_{\ell}$ to denote a RV that is uniformly distributed over $\{0,1\}^{\ell}$. 
 For an $n$-bit vector $x$, we
write $(x)_{i \cdots j}$ to denote the subvector starting at the 
$i$th bit and ending at the 
$j$th bit, inclusively. 
For a sequence  of variables $W=W_1, \cdots, W_n $, and an index set $A=\{i_1, \cdots, i_t\}, i_j \in [1,\cdots, n]$,  we use 
$W[A]$ to denote the subsequence $W_{i_1}, \cdots, W_{i_t}.$ For a vector $m=(m_0, \cdots, m_{n-1}), m_i $ over some field, let $m(x)$ be the polynomial $\sum_{i=0}^{n-1}m_ix^i$.

\vspace{1mm}
\noindent
{\bf Extractors and hash functions.} 
An extractor distills an almost 
uniform random string from a sample of a random source $W$ that has sufficient entropy, possibly using 
a truly random seed.

\vspace{-.5em}
\begin{definition}[ Strong (average case) randomness extractor~\cite{dodis2006robust}]\label{defn:strext} $E: \{0, 1\}^n\times \{0, 1\}^r\rightarrow \{0, 1\}^\lambda$ is an average case  $(n, \alpha, \lambda,  \epsilon)$-extractor if for any pair of random variables $X, A$ with $X$ over $\{0,1\}^n$ and $\tilde{H}_\infty(X|A)\ge \alpha$, we have  
 $\Delta(E(X, R), A, R; U, A, R)\le \epsilon,$
where $R$ and $U$ are 
uniformly distributed over $\{0,1\}^r$ and 
$\{0, 1\}^\lambda$, respectively. An extractor $E$ is linear if $E(X_1+X_2, R)=E(X_1, R)+E(X_2, R), $ for any $X_1, X_2\in \{0, 1\}^n$ and $R\in \{0, 1\}^r. $
\end{definition}

In above, 
$A$   is the auxiliary variable, and  
if it is null, we have a 
{\em strong $(n, \alpha, \lambda, \epsilon)$-extractor.}
A {\em universal hash   family} gives a randomness extractor 
whose parameters are given by the Leftover Hash Lemma~\cite{impagliazzo1989pseudo}.

\vspace{-.5em}
\begin{definition} [Universal hash family]\label{defn:uhf}
    A family of functions $h:\mathcal{X} \times \mathcal{S} \to \mathcal{Y}$ is called a {\em universal hash family} if $\forall x,y \in \mathcal{X}$ and $x \ne y$, we have:   $\pr[h(x,S)=h(y,S)] \le \frac{1}{|\mathcal{Y}|}$, where the probability is over the uniform choices over $\mathcal{S}$.
\end{definition}
 
\vspace{-1em}
 \begin{lemma}[Generalized Leftover Hash Lemma~\cite{DodisORS08}]
 \label{glhl}
Let \\$h: \mathcal{X} \times \mathcal{S} \rightarrow \{0,1\}^{\ell}$ be a universal hash family. Then for any two  random variables $A \in \mathcal{X}$ and $B \in \mathcal{Y}$, applying $h$ on $A$ can extract a uniform random variable with  length $\ell$ satisfying:   $\Delta(h(A, S), S, B; U_\ell, S, B)\le \frac{1}{2}\sqrt{2^{-\tilde{H}_{\infty}(A|B)}\cdot 2^
\ell}$, where $S$ is chosen uniformly from $\mathcal{S}$. In particular, if $h$ is a universal hash family satisfying $\ell \leq \alpha + 2 - 2 \log{\frac{1}{\epsilon}}$, then $h$ is an  $(n,\alpha,l,\epsilon)$-extractor, where $A$ is an $n$-bit string.
 \end{lemma}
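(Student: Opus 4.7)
The plan is to follow the standard collision-probability route: bound the collision probability of $(h(A,S),S)$ conditioned on each value of $B$ using the universality of $h$, convert that bound into a statistical-distance bound via the collision-probability-to-uniform inequality, and then average over $B$ using Jensen's inequality (concavity of the square root) to bring in the average conditional min-entropy $\tilde{H}_\infty(A\mid B)$.

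First I would fix a value $b$ in the range of $B$ and write $A_b$ for $A$ conditioned on $B=b$, with collision probability $c_b = \sum_{a}\Pr[A=a\mid B=b]^2$. Drawing two independent copies and using that $S$ is uniform on $\mathcal{S}$ and independent of $A$, the collision probability of the pair $(h(A_b,S),S)$ factors as the probability that the two seeds coincide (which is $1/|\mathcal{S}|$) times the probability that the two hash values agree. Splitting on whether the two samples of $A_b$ coincide and invoking Definition~\ref{defn:uhf} in the non-coincidence case, this collision probability is at most $\tfrac{1}{|\mathcal{S}|}\bigl(c_b + 2^{-\ell}\bigr)$.

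Next, I would invoke the elementary fact that for any distribution $D$ on a finite set $T$, $\Delta(D, U_T) \le \tfrac{1}{2}\sqrt{|T|\cdot \mathrm{Col}(D) - 1}$. Applied with $T = \{0,1\}^\ell \times \mathcal{S}$, this yields $\Delta\!\left((h(A,S),S)\mid B=b,\; U_\ell\times U_\mathcal{S}\right) \le \tfrac{1}{2}\sqrt{2^\ell\, c_b}$. Averaging over $B$ (the joint statistical distance against a product distribution equals the expectation of the conditional statistical distances), bounding $c_b \le \max_a \Pr[A=a\mid B=b] = 2^{-H_\infty(A\mid B=b)}$, and then applying Jensen's inequality to move the expectation inside the square root gives
\[
\Delta\bigl(h(A,S),S,B;\; U_\ell,S,B\bigr) \le \tfrac{1}{2}\sqrt{2^\ell \cdot \mathbb{E}_b\bigl[2^{-H_\infty(A\mid B=b)}\bigr]} = \tfrac{1}{2}\sqrt{2^{\ell - \tilde{H}_\infty(A\mid B)}},
\]
which is the claimed bound. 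The ``in particular'' clause then follows by an algebraic rearrangement: setting the right-hand side to be at most $\epsilon$ and taking logarithms yields $\ell \le \alpha + 2 - 2\log(1/\epsilon)$.

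The main obstacle is the Jensen step. Statistical distance averages linearly against the distribution of $B$, whereas the per-$b$ bound is a concave function (a square root) of the conditional collision probability; only the concavity of $\sqrt{\cdot}$ lets one exchange $\mathbb{E}_b[\sqrt{\cdot}]$ for $\sqrt{\mathbb{E}_b[\cdot]}$ in the favourable direction needed to convert the pointwise $H_\infty(A\mid B=b)$ into the average conditional min-entropy $\tilde{H}_\infty(A\mid B)$. A secondary care-point is the initial collision-probability calculation: one must draw two \emph{independent} copies of the seed so that universality applies cleanly and the $1/|\mathcal{S}|$ factor separates out.
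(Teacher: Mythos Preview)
Your argument is correct and is precisely the standard proof of the Generalized Leftover Hash Lemma as given in the cited source~\cite{DodisORS08}: bound the conditional collision probability using universality, convert to statistical distance via $\Delta(D,U_T)\le \tfrac{1}{2}\sqrt{|T|\cdot\mathrm{Col}(D)-1}$, and average over $B$ using the concavity of $\sqrt{\cdot}$. The paper itself does not prove this lemma at all---it is quoted as a known result with the citation---so there is no ``paper's own proof'' to compare against; your write-up simply supplies the omitted (standard) argument.
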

 

\noindent
{\bf Message Authentication Code (MAC)} is a
tuple of algorithms $(\mathsf{gen},\mathsf{mac}, \mathsf{vrfy})$ where $\mathsf{gen}$ generates the key $k\in  \mathbf{K}$,   $\mathsf{mac}$ takes a key $k\in  \mathbf{K}$ and a message $m \in \mathbf{M}$, and  produces a tag $t \in \mathcal{T}$ that will be appended to the message. 
$\mathsf{vrfy}$ algorithm takes a key $k\in  \mathbf{K}$ and pair $(m',t') \in \mathcal{M} \times \mathcal{T}$ and outputs, $\mathsf{accept}$ or $\mathsf{reject}$, indicating the pair is valid or invalid under the  key.

A MAC algorithm is {\em correct} if for any  $m\in \mathcal{M}$, we have,
$\pr[ k  \leftarrow \mathsf{gen}(\lambda),\mathsf{vrfy}(k ,m,\mathsf{mac}(k ,m))= {\mathsf{accept}}] =1.$

A {\em MAC algorithm is  one-time  secure} if  the success chance of the adversary in the following two attacks is bounded by $\delta$.

\noindent
(i) for any message and tag pair $m' \in \mathcal{M}$ and $t'  \in \mathcal{T}$,  $ \pr[ \mathsf{vrfy}(k,m',t')=\mathsf{accept}] \le \delta$;\\
 (ii) for any observed message and tag pair $m\in \mathcal{M}$ and tag pair $t \in \mathcal{T}$, for any  $m' \ne m \in \mathcal{M}$ and $t' \in \mathcal{T}$, we have,
 $  \pr[ \mathsf{vrfy}(k,m',t')=\mathsf{accept}\text{ $|$ }
 m, \mathsf{mac}(k,m)=t] \le \delta$.

 For a more detailed definition see Appendix \ref{dfn:MACAppn}.

\remove{We next provide definition of information-theoretic one-time  Message Authentication Code.
\begin{definition}[one-time MAC] An  $(\mathcal{M},\mathcal{K},\mathcal{T},\delta)$-one-time MAC is a triple of polynomial-time algorithms $(\mathsf{gen},\mathsf{mac}, \mathsf{vrfy})$ where $\mathsf{gen}(\lambda): 1^\lambda \to  \mathcal{K}$ gives a key $k \in \mathcal{K}$ on input security parameter $\lambda$,  $\mathsf{mac}: \mathcal{K} \times \mathcal{M} \to \mathcal{T}$ takes the key $k$ and a message $m \in \mathbf{M}$ as input and outputs a tag $t \in \mathcal{T}$, $\mathsf{vrfy}: \mathcal{K} \times \mathcal{M} \times \mathcal{T} \to \{\mathsf{accept},\mathsf{reject}\}$ outputs either $\mathsf{accept}$ or $\mathsf{reject}$ on input: the key $k$, a message $m  \in \mathcal{M}$  and a tag  $t \in \mathcal{T}$.  
    
    The {\em correctness} property requires that: for any choice of $m\in \mathcal{M}$, we have 
 \begin{equation} 
 \footnotesize
\pr[ k  \leftarrow \mathsf{gen}(\lambda),\mathsf{vrfy}(k ,m,\mathsf{mac}(k ,m))= {\mathsf{accept}}] =1
\vspace{-.3em}
 \end{equation}
 The unforgeability property requires that for any $k \leftarrow \mathsf{gen}(\lambda)$, the following two conditions are satisfied:
 
 (i) for any message and tag pair $m' \in \mathcal{M}$ and $t'  \in \mathcal{T}$,  
 \begin{equation}
  \pr[ \mathsf{vrfy}(k,m',t')=\mathsf{accept}] \le \delta, \\
  \vspace{-.3em}
 \end{equation}
 (ii) for any observed message and tag pair $m\in \mathcal{M}$ and tag pair $t \in \mathcal{T}$, for any adversary's choice of $m' \ne m \in \mathcal{M}$ and $t' \in \mathcal{T}$,
 \vspace{-.4em}
{\small
\begin{equation}
  \pr[ \mathsf{vrfy}(k,m',t')=\mathsf{accept}\text{ $|$ }
 m, \mathsf{mac}(k,m)=t] \le \delta, \\
 \vspace{-.3em}
 \end{equation}
 }.
\end{definition}}

 \remove{
 Calligraphic letters are to denote sets. If $\mathcal{S}$ is a set then $|\mathcal{S}|$ denotes its size. $U_{\mathcal{X}}$ denotes a random variable with uniform distribution over ${\mathcal{X}}$ and $U_\ell$ denotes a random variable with uniform distribution over $\{0,1\}^\ell$. {All the logarithms are in base 2.}

A function $\ms F:\mc X\to \mc Y$ maps an element $x\in \mc X$ to an element $y\in \mc Y$. This is denoted by $ y=\ms F(x)$. 
  We use the symbol  `$\leftarrow$', to
assign a constant value (on the right-hand side) to a variable (on the left-hand side). Similarly, 
we use, `$\stackrel{\$}\leftarrow$', to assign to a variable either a uniformly
sampled value from a set or the output of a randomized algorithm. 
We denote by $x\stackrel{r}\gets \mathrm{P}_X$ the assignment of a  
sample from $\mathrm{P}_X$ to the variable $x$. 
}

\subsection{Fuzzy extractors}
\label{defsfuzzyext}
\remove{
 In this section, we recall definition of fuzzy extractors  and define their security properties {\em reuseability} and {\em strong robustness}.
 }

\vspace{-.5em}
 Fuzzy extractors (information theoretic) were introduced by 
 Dodis et al.  \cite{DodisORS08}.

\remove{
We then  require {\color{blue}strongly} robust security against an adversary with access to $q_e$ and $q_e$ generation and reproduction algorithms.
The outline of our basic  definition (without robustness) is given below.

is the same as the definition of computational-fuzzy extractor
in Canetti et al. \cite{Canetti2020} (Definition 1) except that we replace computational distance by statistical distance  to make it information-theoretically secure 
as defined in the work of Dodis et al. \cite{DodisORS08} (Sections 2.5-4.1).
Below we  state the outline of our new definition of information-theoretic fuzzy extractor.
}
\vspace{-.5em}
\begin{definition}[Fuzzy extractor]
\label{def:fuzzyext} 
Let $\mathcal{W}$ be a family of probability distributions over $\mathcal{M}$. An  $(\mathcal{M},\mathcal{W},\xi, t, \epsilon, \sigma)$-fuzzy extractor FE is a pair of   randomized  algorithms ($Gen$, $Rep$) as follows.
\begin{itemize}
\item[i.] 
$Gen$: $\mathcal{M} \rightarrow \{0,1\}^{\xi} \times \mathcal{P}$ 
takes a source sample $w \in  \mathcal{M}$, and outputs a key $r \in \{0,1\}^{\xi}$ and a public {\em helper string} $p \in \mathcal{P}$. 

\item[ii.] 
$Rep$ : $\mathcal{M} \times \mathcal{P} \rightarrow  \{0,1\}^{\xi} $  takes 
a sample $w' \in \mathcal{M}$ and the helper  string $p \in \mathcal{P}$ as input, and outputs a key $r$.  
\end{itemize} 
A fuzzy extractor must satisfy the following properties.

\noindent
1. {\em $\epsilon$-correctness.} \quad An FE 
is $\epsilon$-correct
if for any $w'$ satisfying
 $d(w,w') \le t$ where $d()$ is a distance function, and  $(r, p)\leftarrow Gen(w)$, we have  $Pr[Rep(w', p) = r] \geq 1 - \epsilon$, where the probability is over the  randomness of $Gen$ and $Rep$. 

\noindent
2. {\em $\sigma$-security.} \quad For any distribution  $W \in \mathcal{W}$,  the key $R$ is close to uniform, conditioned on $P$. 
That is, 
for any pair, \\
$(R, P)  \leftarrow Gen(W)$, we have, ${\rm \Delta}((R, P); (U_{\xi}, P)) \le  \sigma$.

\end{definition}

\vspace{.05in} \noindent \textbf{Reusability and robustness. 
}\quad 
 The following 
 definition of reusability of FE is from \cite{Canetti2020}.  The experiment on the LHS of Figure~\ref{fig:gamesreusrob} describes security of $Gen$ as a game between a challenger and an adversary. The challenger provides $\eta$ outputs of the $Gen $ algorithm  on correlated samples to the adversary, after which the adversary must distinguish between a random string and an output key of $Gen$ that has not been revealed. 
The experiment on the RHS of Figure~\ref{fig:gamesreusrob} is defined similarly with the same information provided to the adversary, whose goal is to modify the helper data on a new sample from the same source without being detected.

\vspace{-1em}
\begin{figure}[!ht]
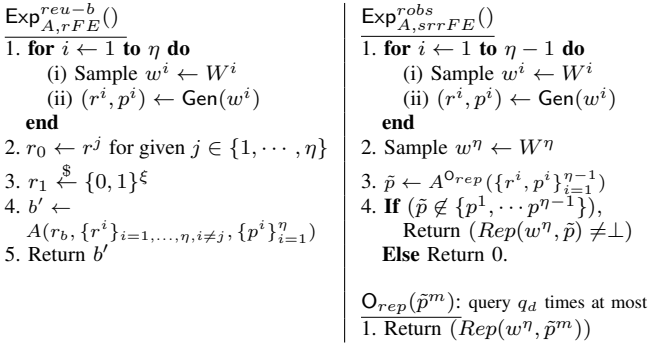

\footnotesize
   \begin{center}
\begin{tabular}{  p{4.3cm} |  p{4cm}}
      \underline{$\mathsf{Exp}^{reu-b}_{A,rFE}()$ } &  \underline{$\mathsf{Exp}^{robs}_{A,srrFE}()$}\\
     1. {\bf for} {$i \gets 1$ \KwTo $\eta$} {\bf do} &  1. {\bf for} {$i \gets 1$ \KwTo $\eta - 1$} {\bf do}\\
      \qquad (i) Sample $w^{i}\gets W^{i}$ & \qquad (i) Sample $w^{i}\gets W^{i}$\\
       \qquad (ii) $(r^{i} , p^{i} )\gets \ms{Gen}(w^{i})$ & \qquad (ii) $(r^{i} , p^{i} )\gets \ms{Gen}(w^{i})$ \\
      \quad  {\bf end}& \quad  {\bf end}\\
      2. $r_0 \gets r^j$ for given $j\in\{1,\cdots,\eta\}$& 2.  Sample $w^\eta\gets W^\eta$  \\ 3. $r_1 \stackrel{\$}{\gets} \{0, 1\}^{\xi} $ & 3. {\scriptsize$\tilde{p} \gets A^{\ms O_{rep}}(\{r^i,p^i\}_{i=1}^{\eta -1})$}   \\ 
      4. {\scriptsize$b' \gets$} & 4. {\bf If} $(\tilde{p} \not\in \{p^1,\cdots p^{\eta -1}\}),  $   \\
       \quad {\scriptsize $A(r_b,\{r^i\}_{i=1,...,\eta, i\neq j}, \{p^i\}_{i=1}^{\eta})$} &\qquad Return $(Rep(w^\eta,\tilde{p}) \neq \perp )$  \\ 5. Return $b'$  & \quad  {\bf Else} Return 0. \\
        &  \\
                &  \underline{$\ms O_{rep}(\tilde{p}^m)$:} {\scriptsize query $q_d$ times at most} \\
                &  1. Return $(Rep(w^\eta,\tilde{p}^m))$  
\end{tabular}
\end{center}
\vspace{-1em}
    \caption{\footnotesize Security experiments for reusability (LHS) and robustness (RHS) for rFE and srrFE, respectively.
    %
  }
    \label{fig:gamesreusrob}
   \vspace{-1.5em}
\end{figure}

\begin{definition}[Reusable fuzzy extractors \cite{Canetti2020}]\label{def:rfuzzyext}
  Let $\mathcal{W}$ be a family of 
  distributions over $\mathcal{M}$, and {\sf rFE}=$(Gen,Rep)$ 
denote an
 $(\mathcal{M},\mathcal{W},\xi, t, \epsilon, \sigma)$-fuzzy extractor as given in Definition \ref{def:fuzzyext}. 
 The FE is {\em $(\eta, \sigma_r)$-reusable} 
if for  any  $\eta$ correlated RVs   $(W^1,...,W^{\eta}), W^j\in \mc W, \forall j \in \{1,...,\eta\}$, and  any computationally  unbounded adversary $A$ in the experiment $\mathsf{Exp}^{reu}_{A,rFE}$ in Figure~\ref{fig:gamesreusrob},
\remove{ if the following 
 property holds.  
 Let $(W^1,...,W^{\eta})$ be 
 $\eta$ correlated RVs, 
 where $W^j\in \mc W, \forall j \in \{1,...,\eta\}$.  For any computationally  unbounded adversary $A$ 
 in experiment $\mathsf{Exp}^{reu}_{A,rFE}$ defined in the left side of Figure~\ref{fig:gamesreusrob},
 } 
 the following advantage 
 is bounded by $\sigma_r$: 
\vspace{-.5em}
\begin{scriptsize}
\begin{eqnarray*}
Adv_A (\mathsf{Exp}^{reu}_{A,rFE}) 
= |\Pr[\mathsf{Exp}^{reu-0}_{A,rFE}()=1] - 
\Pr[\mathsf{Exp}^{reu-1}_{A,rFE}()=1]| \leq \sigma_r &&.
\vspace{-2.5em}
\end{eqnarray*}
\end{scriptsize}
\remove{An $(\mathcal{M},\mathcal{W},\xi, t, \epsilon, \sigma)$-fuzzy extractor is {\em $(\eta, \sigma_r)$-reusable} if for all  adversary $A$ 
and for all $j = 1, ...,\eta$, the advantage is at most $\sigma_r$}.
\end{definition}
\remove{\noindent
{\bf Robustness}
Robustness of fuzzy extractors is defined against an adversary who tampers with $p$ that is input to the reconstruction  algorithm.
Robustness was first defined by Boyen et al.~\cite{boyen2005secure} for an adversary who has access to $p$ only  (referred to as S2 in Section \ref{sec:intro}), and  was strengthened by 
Dodis et al.~\cite{dodis2006robust}  to the case when the adversary has  access to the pair $(p,r)$, and attempts to modify $p$  (referred to as S2 in Section \ref{sec:intro}).
}
\vspace{-1.5em}
We also consider {\em strongly robust and reusable FE (srrFE for short)} defined as follows.


\vspace{-.5em}
\begin{definition}[Strongly robust of reusable fuzzy extractor]\label{def:srrfuzzyext}
\remove{
Let $\mathcal{W}$ be a family of probability distribution over $\mathcal{M}$, 
and let  {\sf srrFE}=$(Gen,Rep)$ 
 be an 
 $(\mathcal{M},\mathcal{W},\xi, t, \epsilon, \sigma)$-fuzzy extractor that is $(\eta, \sigma_r)$-reusable. 
 Let $(W^1,...,W^{\eta}, W')$ be 
 $\eta+ 1$ correlated RVs, where $W', W^j\in \mc W, \forall j \in \{1,...,\eta\}$.
 }

 Consider the  setting of Definition \ref{def:rfuzzyext} and an $(\eta, \sigma_r)$-reusable FE as defined above such that {\footnotesize $d(W^i,W^\eta) \le t, \forall i \in \{1,\cdots,\eta-1\}$}. 
We say the FE is a  {\sf srrFE}, that  is 
 {\em $(q_d,\delta_r)$-strongly robust}, 
if the success probability of  any computationally  unbounded adversary $A$ 
 in the robustness  experiment $\mathsf{Exp}^{robs}_{A,srrFE}$ defined in the the right side of Figure \ref{fig:gamesreusrob}   
 is bounded by $\delta_r$: 

\vspace{-1em}
{\footnotesize
\begin{align*}
    Adv_{A}(\mathsf{Exp}^{robs}_{A,srrFE})=\Pr[\mathsf{Exp}^{robs}_{A,srrFE}()=1] \leq \delta_r 
    \vspace{-.5em}
\end{align*}
}
 The term ``strong" refers to arbitrary correlated queries (random variables) that can be chosen by  the adversary.

\remove{Game $Exp_{FE_{srr},\ms D}^{rob}(1^\lambda)$: We define the following game for all $j = 1, ..., \eta$:
\begin{itemize}
    \item \textbf{Sampling:} Challenger samples $w^j\gets W^j$ and $w'\gets W'$   
    \item \textbf{Generation:} Challenger computes $(r^j , p^j )\gets \ms{Gen}(w^j)$ and returns $(r^j, p^j )_{j=1}^n$ to the adversary $\ms D$.
    \item \textbf{Reproduction oracle queries:} Adversary $\ms D$ may adaptively make at most $q_d$ reproduction oracle queries of the form $\tilde{p}^m$, $m \in \{1,\cdots,q_d\}$, to $\mathcal{C}$. Challenger $\mathcal{C}$ runs $Rep(w',{\tilde{p}}^m)$ and returns its output to $\ms D$, $\forall m \in \{1,\cdots,q_d\}$.
    \item \textbf{Forgery test:} $\ms D$ submits its forgery $\tilde{p}$ to $\mathcal{C}$. A wins if $\tilde{p} \not\in \{p^1,\cdots p^\eta\}$ and $Rep(w',\tilde{p}) \neq \perp$. The experiment outputs $1$ if $\ms D$ wins and $0$ otherwise.
\end{itemize}
}

\end{definition}


\section{Construction~\ref{const1} \&~\ref{const2:cca}: IT-secure 
fuzzy extractor for structured sources}
\label{construction:const1}
\vspace{-.5em}
In this section we present the construction of an rFE for a {\em structured source}, and then extend it to an srrFE by using a MAC.
The constructions 
are for 
for {\bf $(\alpha, m, N)$-sources} that are defined below.
\vspace{-.5em}
\begin{definition}[\bf $(\alpha, m, N)$-source] \label{defn:condsource} 
Consider  a source $W=W_1, \cdots, W_n$  that consists of strings of length $n$  over an alphabet ${\cal Z}$.   For parameters $\alpha$, $m$ and $ N$, we say $W$  is an {\em $(\alpha, m, N)$-source} if 
for any two random subsets $A, B\subseteq [1,\cdots, n]$, with cardinality $m$ and $N$, respectively, we have   $\tilde{H}_\infty(W[A] \mid W[B], A, B)\ge \alpha$,
where the probability is over the randomness of $W$ and index sets $A, B.$
\vspace{-.5em}
\end{definition}
 Intuitively the requirement is that entropy of a random subsample  of size $m$ is high, when any 
substring of length $N$  of the sample is seen. 
The  $\alpha$-entropy $k$-samples sources defined in \cite{Canetti2020}, are  less restrictive class of  $(\alpha, k, N)$-sources when $N=0$.
Our constructions are for binary alphabets,  ${\cal Z}=\{0, 1\}. $

\vspace{.05in} 
\begin{construction}\label{const1} (rFE)   Let $E$ be an average $(m, \alpha, \nu,   \epsilon)$-extractor,  
and $W=W_1,W_2,...,W_n$ be an $(\alpha, m, N)$-source with $W_i$ over alphabet $\{0, 1\}$. 
The $Gen$ and $Rep$ algorithms in Figure \ref{fig:const1} 
describe the  fuzzy extractor generation $Gen$  and reproduction procedures  $Rep$, respectively,  for the  source. 
\end{construction}
\vspace{-1em}
\begin{figure}[!ht]
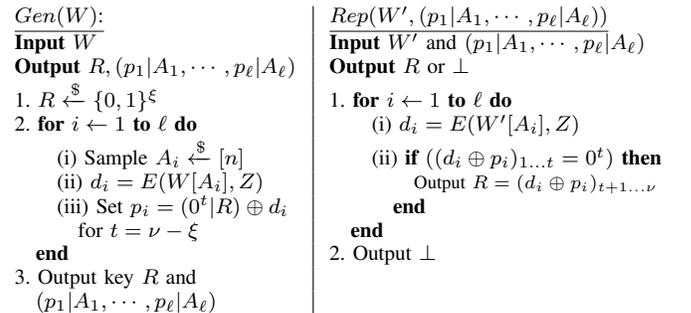

\footnotesize
   \begin{center}
\begin{tabular}{  p{3.75cm} |  p{4.5cm}}
       \underline{$ Gen(W)$:} &  \underline{$ Rep(W',(p_1|A_1, \cdots, p_\ell|A_\ell))$}\\ 
     {\bf Input} $W$ &   {\bf Input} $W'$ and $(p_1|A_1, \cdots, p_\ell|A_\ell)$ \\
     {\bf Output} $R,(p_1|A_1,\cdots, p_\ell|A_\ell)$ &   {\bf Output} $R$ or $\perp$ \\
     1. $R\xleftarrow{\$} \{0,1\}^{\mathcal{\xi}}$ &   1. {\bf for} {$i \gets 1$ \KwTo $\ell$} {\bf do}\\
      2. {\bf for} {$i \gets 1$ \KwTo $\ell$} {\bf do}&  \qquad (i) $d_i = E(W'[A_i], Z)$\\
       \qquad (i)  Sample $A_i \xleftarrow{\$} [n]$  & \qquad (ii) {\bf if} {$((d_i \oplus p_i)_{1...t}=0^{t})$} {\bf then} \\ 
       \qquad (ii) $d_i = E(W[A_i], Z)$
       & \qquad  \qquad  {\scriptsize Output  $R=(d_i \oplus p_i)_{t+1...\nu}$} \\
       \qquad (iii) Set $p_i = (0^{t}|R) \oplus d_i$ & \quad  \qquad   {\bf end} \\
      \qquad \quad  for $t=\nu-\xi$ & \quad  {\bf end}\\
      \quad {\bf end}    &  2. Output 
      $\perp$ \\
     3. Output key $R$ and & \\
       \quad $(p_1|A_1,\cdots, p_\ell|A_\ell)$ & \quad \qquad     
\end{tabular}
\end{center}
\vspace{-1em}
    \caption{\footnotesize 
    Construction~\ref{const1}. 
     $\ell, t, \lambda, \xi, n, m, L$ are public parameters. 
     The 
     randomness $Z$ is shared through an authenticated channel.
   $(\rho)_{i\cdots j}$ denotes the substring starting at bit $i$ and ending at 
   bit $j$ of $\rho.$ 
  %
  In line $(i)$ of $Gen(W)$, 
  $A_i \xleftarrow{\$} [n]$ denotes choosing
  $A_i=\{i_i,\cdots,i_m\},  i_j \xleftarrow{\$} [n], 
  \forall j \in  [1,m] 
  $.}.
    \label{fig:const1}
    \vspace{-1.6em}
\end{figure}

 The construction is inspired by the {\em sample-then-lock} approach of Canetti et al. \cite{Canetti2020}, outlined in section \ref{sec:intro}.

\remove{
\subsubsection{Security analysis of fuzzy extractor Construction~\ref{const1}}
\label{securityfuzzyextconst1}
%
%
In the following we prove the construction is a reusable FE, satisfying Definition \ref{def:rfuzzyext}.
The construction is reusable in the sense of (R1.2, R2.2).
} 
\begin{theorem}[Reusable fuzzy extractor]
\label{THM:FUZZYEXTOTCONST1}  In $Gen$ algorithm in Figure \ref{fig:const1}, let $E$ be instantiated 
by $H$, a 
 universal hash family. 
 Let $\xi$ denote the length of the extracted key. For a chosen value $\ell$, and  $\epsilon' $, $\sigma$ 
that satisfy {\footnotesize $\xi \leq \alpha + 2 - 2 \cdot \log(\frac{2 \ell}{\sigma}) - t$}, the $(Gen, Rep)$ procedures in  Figure~\ref{fig:const1} correspond to an $(\eta,\sigma)$-reusable $(\mathcal{V}^n,\mathcal{W},\xi,t',\epsilon',\sigma)$-fuzzy extractor,
 where
\\{\footnotesize $(1-(1-\frac{t'}{n-m})^m)^\ell + \ell \cdot 2^{-t} \leq \epsilon'$} 
and $\ell \eta m<N$.
\end{theorem}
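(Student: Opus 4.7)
The plan is to verify separately the two parts of Definition~\ref{def:rfuzzyext}: $\epsilon'$-correctness of $(Gen,Rep)$ on samples within Hamming distance $t'$, and $(\eta,\sigma)$-reusability against the experiment on the left of Figure~\ref{fig:gamesreusrob}.

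For \textbf{correctness}, I would decompose the failure event of $Rep(W',\cdot)$ into (i) the ``miss'' event that every index set $A_i$ contains at least one position where $W,W'$ disagree, and (ii) the ``false-accept'' event that some $A_i$ with $W[A_i]\neq W'[A_i]$ nevertheless passes the check $(d_i'\oplus p_i)_{1\ldots t}=0^t$. For (i), $W$ and $W'$ differ in at most $t'$ positions, so a uniformly random $m$-subset avoids all errors with probability $\binom{n-t'}{m}/\binom{n}{m}\ge(1-t'/(n-m))^m$; independence of the $\ell$ subsets bounds (i) by $(1-(1-t'/(n-m))^m)^\ell$. For (ii), the universal hash property (Definition~\ref{defn:uhf}) gives probability at most $2^{-t}$ that $E(W[A_i],Z)$ and $E(W'[A_i],Z)$ collide on their leading $t$ bits, and a union bound over $i\in[\ell]$ yields $\ell\cdot 2^{-t}$. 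Summing these matches the claimed $\epsilon'$.

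For \textbf{reusability} I would run a hybrid argument that, on either side $b\in\{0,1\}$ of the game, gradually replaces the pads $d^j_i=E(W^j[A^j_i],Z)$ of the challenge index $j$ by fresh uniform strings $U^j_i$. After all $\ell$ replacements, $p^j_i=(0^t\|R^j)\oplus U^j_i$ is uniform and independent of $R^j$, so the $b=0$ and $b=1$ branches collapse to the same distribution (since $R^j$ and $r_1$ are both uniform in $\{0,1\}^\xi$ and independent of the rest). To bound a single step $d^j_k\to U^j_k$, let $V$ denote the adversary's view other than $p^j_k$: the seed $Z$, all chosen index sets, the released keys $\{R^i : i\neq j\}$, the other ciphertexts $\{p^i_s : i\neq j, s\in[\ell]\}$, and $p^j_1,\ldots,p^j_{k-1}$. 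Re-parameterizing each revealed pair $(R^i,p^i_s)$ as $(R^i,d^i_s\oplus(0^t\|R^i))$ with $R^i$ sampled independently of the $W^\cdot$ shows that $V$ is a deterministic function of fresh independent randomness and $W[B_V]$ for some index set $B_V$ with $|B_V|\le\ell\eta m$. The hypothesis $\ell\eta m<N$ together with Definition~\ref{defn:condsource} then yields $\tilde{H}_\infty(W^j[A^j_k]\mid V)\ge\alpha$, and the parameter choice $\xi+t\le\alpha+2-2\log(2\ell/\sigma)$ allows the Generalized Leftover Hash Lemma (Lemma~\ref{glhl}) to bound one hybrid step by $\sigma/(2\ell)$. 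A telescoping sum over $\ell$ steps on each side together with the triangle inequality gives advantage at most $\sigma$.

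The \textbf{main obstacle} is the entropy accounting in each hybrid step, specifically justifying that the full auxiliary view $V$ can be folded into conditioning on $W$ evaluated on an index set of size at most $N$. The key technical point is to isolate the dependence of $V$ on $W$: the keys $R^i$ for $i\neq j$ are internally sampled by $Gen$ and can be decoupled from the $W^i$'s by the re-parameterization above, after which the only $W$-dependence of $V$ is through the at most $\ell\eta m$ bits of the $W^{\cdot}$'s at the queried positions. Once this decoupling is formalized, Definition~\ref{defn:condsource} applied with any superset $B\supseteq B_V$ of size $N$ delivers the required lower bound on $\tilde{H}_\infty(W^j[A^j_k]\mid V)$, and LHL closes each step.
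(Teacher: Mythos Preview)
Your proposal is correct and tracks the paper's proof closely. The correctness decomposition into the ``miss'' and ``false-accept'' events, with the bounds $(1-(1-t'/(n-m))^m)^\ell$ and $\ell\cdot 2^{-t}$, is exactly what the paper does. For reusability the paper packages the work into two auxiliary lemmas---Lemma~\ref{le: dist} (an induction showing $({\bf d},Z,{\bf A})$ is $\ell\epsilon$-close to $(U^\ell,Z,{\bf A})$) and Lemma~\ref{le: d} (two applications of Lemma~\ref{le: dist} to get $\Delta(R,{\bf p},Z,{\bf A};U,{\bf p},Z,{\bf A})\le 2\ell\epsilon$)---whereas you inline the same computation as a direct hybrid over the pads $d^j_k$ and invoke Lemma~\ref{glhl} at each step. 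The induction inside Lemma~\ref{le: dist} \emph{is} your hybrid, and Lemma~\ref{le: d} corresponds to your observation that once all pads are uniform the $b=0$ and $b=1$ branches collapse; the two presentations yield the same $2\ell\cdot(\sigma/2\ell)=\sigma$ bound.

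One minor bookkeeping slip: your auxiliary view $V$ at step $k$ should also contain $p^j_{k+1},\ldots,p^j_\ell$, which the adversary sees and which still depend on $W^j$ through the not-yet-replaced pads $d^j_{k+1},\ldots,d^j_\ell$. Including them adds at most $(\ell-k)m$ positions to $B_V$, already covered by your bound $|B_V|\le\ell\eta m<N$, so the conclusion is unaffected. On the passage from $\eta=1$ to general $\eta$ the paper is no more detailed than you are---it simply asserts that ``the entropy of the new samples remains the same'' without spelling out how Definition~\ref{defn:condsource} (stated for a single $W$) governs conditioning on bits of the other correlated samples $W^i$; your explicit re-parameterization of the revealed pairs $(R^i,p^i_s)$ and the index-set accounting are, if anything, slightly more careful than the paper on this point.
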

\begin{proof}
The following is a proof outline. The complete proof is in Appendix~\ref{pf:THM:FUZZYEXTOTCONST1}. 
\remove{an outline of the proof. 
To prove that the construction satisfies 
Definition~\ref{def:rfuzzyext}, we prove 
$\epsilon'$-correctness according to Definition~\ref{def:fuzzyext} and then reusability according to Definition~\ref{def:rfuzzyext}.
}

 \textit{
Correctness.} 
We show that 
$Rep$ algorithm fails with probability at most $(1-\epsilon')$. 
 The $Rep$ algorithm fails in two cases. 
First, none of the $\ell$ subsamples of length $m$ in $W'$ matches the corresponding subsample of $W$. 
We derive expressions that bounds and shows how to choose $\ell$ to have 
 at least one matched subsample of length $m$  in $w$ and $w'$ with probability at least  $(1-\epsilon')$.\\
Second, there is a match in step $(ii)$ of $Rep$ because of collision in the universal hashing. More specifically,  the hash value of a subsample in $W$  is equal to the hash value of the corresponding subsample in $W'$ {\em in the first $t$ bits} (the rest is padded with   random key).

 This collision in step $(ii)$ of $Rep$ in figure~\ref{fig:const1}, occurs with probability $(\frac{1}{2^{t}})$ when $E$ is 
 a randomly chosen 
 from a universal hash family. 
 In this case $Rep$ will output a key $R'\neq R$ because  $Rep$ algorithm ends as soon as the first  decryption succeeds.


\remove{\textit{
Correctness.} 
 For correctness, we need to show that 
    the  the reproduction algorithm fails with probability at most $(1-\epsilon')$. 

 The $Rep$ algorithm fails in two cases. 
First, none of the $\ell$ subsamples of length $m$ in $W'$ matches the corresponding subsample of $W$. 

In Appendix~\ref{pf:THM:FUZZYEXTOTCONST1}, we derive expressions that relate $\ell$ to the probability of success in having at least one matched subsample of length $m$  for the two samples $w$ and $w'$.

Second, there is a match at step $(ii)$ due to a collision in the universal hash algorithm. More concretely,  the hash value of a subsample in $W$  collides with the hash value of the corresponding subsample in $W'$ in the first $t$ bits (which are padded with all zeros), even if the subsamples themselves and the rest of the hash outputs do not match.

 More specifically, a collision in step $(ii)$ of $Rep$ in figure~\ref{fig:const1}, may occur with probability $(\frac{1}{2^{t}})$ when $E$ is 
 a randomly chosen function from a universal hash family. 
 In this case $Rep$ will output a key $R'\neq R$. Note that the $Rep$ algorithm ends as soon as one of the decryptions succeeds.

The complete proof of the theorem is in Appendix \ref{pf:THM:FUZZYEXTOTCONST1}.
}

\textit{Reusability.}  
We first prove for 
$\eta=1$, and  then  extend it to 
$\eta > 1$. 
{\bf 
$\eta=1$}. We need to prove that ${\rm \Delta}((R, P), (U_{\xi}; P)) \le  \sigma$, where $P$ is the public strings.

 In step $(i)$ of  $Gen()$ algorithm  in figure~\ref{fig:const1}, 
$A_i$ is randomly sampled, and 
in step $(ii)$, $E$ is implemented by $H$.

Since $W$ is a $(\alpha, m, N)$-
source, we have $\tilde{H}_\infty(W[A] \mid  W[B], A, B)\ge \alpha$, where $A$ 
is a  random subset of $[n]$, and $B\subset [n]$ of size   $N$ and disjoint from $A$.
Since $H$ is 
universal hash family with output length $\nu$, from lemma~\ref{glhl},
$H$ is an $(m,\alpha,\nu,\frac{\sigma}{2\ell})$-extractor as long as
 $\nu \leq \alpha + 2 - 2\log(\frac{2\ell}{\sigma})$.  
Since $\nu = \xi + t$, we obtain an upper bound   on $\xi$ for
$H$ to be an $(m,\alpha,\nu,\frac{\sigma}{2\ell})$-extractor, and we have $\Delta(R, {\bf p}, Z, {\bf A}; U, {\bf p}, Z, {\bf A})\le \sigma.$
\remove{
\begin{equation}
\label{sdadvconst1}
\Delta(R, {\bf p}, Z, {\bf A}; U, {\bf p}, Z, {\bf A})\le \sigma.
\end{equation}

Since $H:\mathcal{X} \times \mathcal{S} \rightarrow \{0,1\}^\nu$ is a  
universal hash family, from lemma~\ref{glhl}, if $\nu \leq \alpha + 2 - 2\log(\frac{2\ell}{\sigma})$, $H$ is an $(m,\alpha,\nu,\frac{\sigma}{2\ell})$-extractor.  Now, $\nu = \xi + t$. Thus, if $\xi \leq \alpha + 2 - 2\log(\frac{2\ell}{\sigma}) - t$, $H$ is an $(m,\alpha,\nu,\frac{\sigma}{2\ell})$-extractor. 
}


\remove{Since $H$ is an $(m,\alpha,\nu,\frac{\sigma}{2\ell})$-extractor, from lemma~\ref{le: dist}, \ref{le: d} in  Appendix~\ref{pf:THM:FUZZYEXTOTCONST1}, if $\xi \leq \alpha + 2 - 2\log(\frac{2\ell}{\sigma}) - t$ and $\ell m<N$, considering $S=R$ in lemma~\ref{le: d} in Appendix~\ref{pf:THM:FUZZYEXTOTCONST1},   we have
\begin{equation}
\label{sdadvconst1}
\Delta(R, {\bf p}, Z, {\bf A}; U, {\bf p}, Z, {\bf A})\le \sigma.
\end{equation}
}
Therefore, $(Gen,Rep)$ is an srrFE with 
$(1-(1-\frac{t'}{n-m})^m)^\ell + \ell \cdot 2^{-t} \leq \epsilon'$, $\ell m<N$ and $\mathcal{V} \in \{0,1\}$ 
and the extracted key length $\xi \leq \alpha + 2 - 2\log(\frac{2\ell}{\sigma}) - t$.

 {\bf Case $\eta>1$}. Similar to the case $\eta=1$ and taking into account property of $(\alpha,m,N)$-source that ensures  the entropy of the new samples remain the same. 
\end{proof}

 \remove{In response to a query to $Gen$ oracle (i.e. generation oracle query), the oracle returns a pair of key $r^i$ and ciphertext $c^i$ to the adversary, where $r^i=R^i$ and $c^i=(p_1|A_1,\cdots, p_\ell|A_\ell)^i$ according to the $Gen(W^i)$\remove{MOD.$iK_{RFE}.Enc(W)$} procedure described in\remove{MOD.Algorithm~\ref{alg:fuzzygeneration1}} figure~\ref{fig:const1}. Now since $W$ is a source with $(\alpha,m,N)$-samples, in each query to generation\remove{encapsulation} oracle, $Gen(W^i)$\remove{MOD.$iK_{RFE}.Enc(W)$} procedure runs with new samples with conditional entropy $\alpha$. Hence, the uncertainty about the new samples remains the same before and after  $\eta$\remove{MOD.$q_e$} queries to the generation\remove{encapsulation} oracle from adversary's perspective. Therefore, the entropy of the new samples remains same. Now proceeding in similar manner as the proof of {\it security} 
 , we can prove that, if \\$\xi \leq \alpha + 2 - 2 \cdot \log(\frac{2 \ell}{\sigma}) - t$\remove{MOD.$\xi \leq \alpha + 2 - 2 \cdot \log(\frac{4\cdot \ell}{\sigma}) - t - 2\lambda$}, then the $(Gen,Rep)$\remove{MOD.iKEM $iK_{RFE}$} described in figure~\ref{fig:const1}\remove{MOD.Algorithm~\ref{alg:fuzzygeneration1} and \ref{alg:fuzzyreproduction1}} is $(\eta,\sigma)$-reusable $(\mathcal{V}^n,\mathcal{W},\xi,t',\epsilon',\sigma)$-fuzzy extractor, where   
$(1-(1-\frac{t'}{n-m})^m)^l + l \cdot 2^{-t} \leq \epsilon'$ and $\ell \eta m<N$. 
} 
\remove{
\textbf{Proof Idea} Following Definition~\ref{def:fuzzyext}, we need to proof the $\epsilon$-correctness and $\sigma$-security property of the fuzzy extractor, when there is no attack. In the construction, the parameters $m$ and $\ell$ represents a trade-off between correctness and efficiency. We provide the detailed proof in the Appendix~\ref{pf:THM:FUZZYEXTOTCONST1}. $\qed$



Theorem~\ref{THM:FUZZYEXTOTCONST1} gives the relation among parameters to bound the error probability (i.e. correctness) of the protocol by $\epsilon'(\lambda)$, and given those parameters, gives the maximum number of key bits that can be established by the 
fuzzy extractor with the statistical distance between the random variable corresponding to the established key and the uniform distribution bounded by $\sigma(\lambda)$. Using the similar argument of Theorem~\ref{THM:FUZZYEXTOTCONST1}, we can also prove the reusability of Construction~\ref{const1}. More specifically, Since $W$ is a source with $(\alpha,m,N)$-samples, for each $i \in \{1,\cdots,\eta\}$, $Gen(W^i)$ procedure runs with new samples with conditional entropy $\alpha$. 
Hence, the uncertainty about the new samples remain the same before and after receiving $\eta$ the key and ciphertext pairs $(r^i,c^i)_{i=1,\cdots,\eta,i \neq j}$.
}


\vspace{1mm}
 \remove{
 \subsection{Construction 3: strongly robust and reusable fuzzy extractor}
\label{construction:const2}
\subsection{Adding robustness}
Construction~\ref{const1} is an IT-secure reusable FE. We extend this construction to a strongly robust, reusable, IT-secure FE  (in the standard model). To our knowledge 
}
\vspace{-.5em}
\subsubsection{A strongly robust and 
reusable FE}  
  Construction 2 is an srrFE and is in 
   {\em CRS model} which assumes that sender and receiver share a public random string that
is  independent of the source, and is  used  to specify 
$A_1, \cdots, A_\ell$ and $Z$.  
Robustness  is obtained by  using an information theoretic MAC with a special property, together with Construction 1.
%
This is the first  IT-secure srrFE without assuming random oracle, and is for structured sources.

\begin{construction}\label{const2:cca}  \remove{Let $E$ be an average $(m, \alpha, \nu,   \epsilon)$-extractor, and 
$W=W_1,...,W_n$ denote an $(\alpha, m, N)$-source, where $W_i \in \{0, 1\}$.} 
Consider the same source as Construction 1, and $E$ be an average $(m, \alpha, \nu,   \epsilon)$-extractor.
Figure~\ref{fig:const12}  describes  the 
$Gen$ 
and 
$Rep$ algorithms of our srrFE.
\remove{Figure~\ref{fig:gen1} and~\ref{fig:gen2}(resp. Figure~\ref{fig:rep1} and ~\ref{fig:rep2}) 
in  Appendix~\ref{appendix_fig_ccaITconst} 
depict a pictorial representation of our fuzzy extractor generation (resp. reproduction) algorithm, where the encoding algorithm for $p_1|\cdots|p_\ell$ is described later in this section. }   
\end{construction}
\remove{{\bf CRS Model. }  We assume that sender and receiver have access to a common random string that
is used  to specify 
$A_1, \cdots, A_\ell$ and $Z$. 
The CRS is  independent of the source.  The same 
is  used even in the  computational robustly reusable fuzzy extractor construction in~\cite{WenLAsiaCr18}.
} 

\remove{\noindent
{\bf Bit-string Encoding.} 
\quad 
To encode a bit string into a string over 
$GF(2^\lambda)$,
we use a 
primitive element $\omega$ of $GF(2^\lambda)$ and 
encode  
the bit string
$a_0a_1\cdots a_{\lambda-1}$ as $\sum_{i=0}^{\lambda-1}a_i\omega^i$. An incomplete block is appropriately padded with 
zeros.
}

\vspace{-1.5em}
\begin{figure}[!ht]
\scriptsize
   \begin{center}
\begin{tabular}{  p{3.85cm} |  p{4.5cm}}
      \underline{$Gen(W)$} &  \underline{$Rep(W',(p_1, \cdots, p_\ell, T))$}\\
     {\bf Input} $W$ &   {\bf Input} $W'$ and $(p_1, \cdots, p_\ell, T)$ \\
     {\bf Output} $R,(p_1,\cdots, p_\ell,T)$ &   {\bf Output} $R$ or $\perp$ \\      
     1. $R\xleftarrow{\$} \{0,1\}^{\mathcal{\xi}}$, $R_1\xleftarrow{\$} \{0,1\}^{2\mathcal{\lambda}}$&  1. {\bf for} {$i \gets 1$ \KwTo $\ell$} {\bf do}\\
      2. {\bf for} {$i \gets 1$ \KwTo $\ell$} {\bf do}&  \qquad (i) $d_i = E(W'[A_i], Z)$\\
       \qquad (i)  Sample a  random subset  &  \qquad (ii) {\bf if} {$((d_i \oplus p_i)_{1...t}=0^{t})$} {\bf then} \\ $\empty$ \hspace{.25in} 
       $A_i=\{i_1, \cdots, i_m\}$ from &\qquad \qquad (a) Set $\rho=(d_i \oplus p_i)_{t+1...\nu}$ \\
       $\empty$ \hspace{.25in} $[n]$ 
       & \qquad \qquad  (b) Set $R = (\rho)_{t+1...t+\xi}$, \\
      \qquad (ii) $d_i = E(W[A_i], Z)$&  \qquad \qquad \quad  $R_1 = (\rho)_{\nu-2\lambda+1...\nu}$,  \\
      \qquad (iii) Set $p_i = ((0^{t}|R|R_1) \oplus d_i)$ &  \qquad \qquad  (c) $T'=$  \\
      \qquad \qquad for $t=\nu-\xi-2\lambda$ & \qquad \qquad \quad $Eval((p_1,\cdots, p_\ell),R_1,L)$ \qquad \qquad \\
      \quad {\bf end}&  \qquad \qquad (d) {\bf if} $T'=T$, {\bf output} key $R$;  \\ 
      4. Let $L=\lceil \ell \nu/\lambda \rceil+4$, and &  \quad \qquad \qquad   {\bf otherwise}, continue  \\
      \quad $p=(p_1,\cdots, p_\ell)$ & \quad \qquad   {\bf end} \\
       5. $T=Eval(p,R_1,L)$ & \quad  {\bf end}   \\ 
      6. Output key $R$ and ciphertext $(p_1,\cdots, p_\ell, T)$ &  2. Output $\perp$ 
\end{tabular}
\end{center}
\vspace{-1.2em}
    \caption{\footnotesize 
     srrFE  $Gen$ and $Rep$ algorithms.  $\ell, t, \lambda, \xi, n, m, L$ are system public parameters. $E$ is an average $(m, \alpha, \nu, \epsilon)-$extractor. The 
    subsets $A_i$ and $Z$ are 
    obtained from CRS. 
    $Eval(\cdot)$ is  defined in Algorithm~\ref{alg:compufuzzygeneration11}.}
    \label{fig:const12}
    \vspace{-1.5em}
\end{figure}
\vspace{-.7em}
  {\setlength{\algoheightrule}{0pt}
\setlength{\algotitleheightrule}{0pt}
\SetAlgoNoLine%
 	\begin{algorithm}[!ht]
 	\footnotesize
 		\SetAlgoLined
 		\DontPrintSemicolon
            \rule{.4\textwidth}{0.95pt}
            
 		\SetKw{KwBy}{by}
 		\SetKwBlock{Beginn}{beginn}{ende} 
   		1. Encode $p$ to vector ${\bf m}$ of length $L-4$ in $GF(2^\lambda)$  \\
 		2. Parse $R_1=x|y$ for $x, y \in \{0, 1\}^\lambda$  \\
        3. Compute $T=x^L+x^2{\bf m}(x)+xy$ for ${\bf m}(x)=\sum_{i=0}^{L-5}m_ix^i.$\\
        4. Return $T$ \\ 
   \vspace{-.9em}
 		\parbox{\linewidth}{\caption{\footnotesize $T \leftarrow Eval(p,R_1,L)$}\label{alg:compufuzzygeneration11}}  
  \vspace{-1.7em}
 	\end{algorithm}
 }




\remove{In our construction, this CRS assumption is  important. Otherwise,  the adversary might   select  some index set $A_i$ with low entropy for $w[A_i]$ and hence   correctly guess $w[A_i]$ with high probability, and then  generate 
 $P$ that is  acceptable by $Rep$ algorithm.
CRS model prevents attacker  from choosing a weak sample.}


\noindent
{\em  Providing robustness.}
To provide  robustness for the rFE, we use a part of  the extracted key to compute an authentication tag  using a one-time IT-secure MAC, and append it to $P=p\parallel tag$.
%
We note 
that  if an adversary modifies $p$ to $\hat{p}$, because of the linearity of 
``one-time-pad''  
the key $(R\parallel  R_1)$ will be shifted by a value $\delta=p+\hat{p}$ that is known  to the adversary.
We use a  MAC construction that provides  security against a known  key shift.

For key $(x, y)$, the 
tag function is, $T(x, y, {\bf m})= x^L+x^2{\bf m}(x)+xy$, where ${\bf m}$ is a vector over $GF(2^\lambda)$ of length at most $L-5$ (so  ${\bf m}(t)$ is a polynomial of degree at most $L-5$; see notations in  Section \ref{sec:pre}). 
The verification algorithm for $(m',t')$ is by computing the tag function for $m'$, and comparing the result with $t'$.

In Appendix~\ref{pf:lemmamac},
we prove that the MAC  provides one-time security.

\remove{Verification algorithm for a message and tag pair $m\parallel tag$ is by applying the tag function on $m$, and comparing the result with the received tag, $tag$.

\noindent The following lemma shows that the tag function has one-time authentication property; see Appendix~\ref{pf:lemmamac} for a proof. 
\vspace{-.5em}
\begin{lemma}
Let $L=3$ mod 4 and  $m$ be an arbitrary but given   vector of length at most $L-5$ over $GF(2^\lambda).$  Let $x, y$ be uniformly random over $GF(2^\lambda).$ Then, given $T=x^L+x^2m(x)+xy$, the following holds  
\begin{equation}
\footnotesize
(x+\delta_1)^L+(x+\delta_1)^2 m'(x+\delta_1)+(x+\delta_1)(y+\delta_2)=T' \label{eq: mm2'}
\end{equation}
 with probability at most $L2^{-\lambda}$, where $T', \delta_1, \delta_2\in GF(2^\lambda)$ and  $m'$  a vector over $GF(2^\lambda)$ of length at most $L-5$ with $m'\ne m$,  are arbitrary but all deterministic in $T$ and  the probability is over the choices of $(x, y)$.    \label{le: forge}
\end{lemma}}

\remove{
\subsubsection{Security analysis of fuzzy extractor Construction~\ref{const2:cca}}
\label{securityfuzzyextconst2}

In this section, we provide security properties of Construction~\ref{const2:cca}.

The construction's security property is stated as follows.
}
\begin{theorem}[Robust and reusable FE]
\label{THM:FUZZYEXTOTCONST2} Let $\mathcal{W}$ be a family of $(\alpha, m, N)$-sources and $E$ be  an average  $(m, \alpha, \nu, \epsilon)$-randomness linear extractor for source $W$. 
 Let $E$ be implemented by $H$, a 
universal hash family. Fix $\ell$ and let $\xi$ be the length of the extracted key. If adversary makes at most $q_e$ generation queries and $q_d$ reproduction queries, then for any $\delta$, $\epsilon'$, $\sigma$ satisfying $\xi \leq \alpha + 2 - 2 \cdot \log(\frac{4 \ell}{\sigma}) - t - 2\lambda$, $(Gen, Rep)$ described in Construction~\ref{const2:cca}  is a $(q_d,\delta)$-strongly robust, $(q_e+1,\sigma)$-reusable $(\mathcal{V}^n,\mathcal{W},\xi,t',\epsilon',\sigma)$-fuzzy extractor, where  
$(1-(1-\frac{t'}{n-m})^m)^\ell + \ell \cdot 2^{-t} \cdot L\cdot 2^{-\lambda} \leq \epsilon'$, $(q_e+1)\ell m<N$,  $\delta=(q_d+q_e)\ell\epsilon+q_d2^{-\lambda}\ell (L+1)$,  $(q_e+q_d)\ell m<N$ and $L=\lceil\ell \nu/\lambda\rceil+4$. 
\end{theorem}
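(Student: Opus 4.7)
The plan is to prove the three properties of Definition~\ref{def:srrfuzzyext} separately, reusing the correctness/reusability skeleton from Theorem~\ref{THM:FUZZYEXTOTCONST1} and layering the key-shift MAC argument for robustness.

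For \emph{correctness}, I would follow the structure of Construction~\ref{const1}. The $Rep$ algorithm fails in one of two ways: (i) none of the $\ell$ random subsamples of $W'$ matches the corresponding subsample of $W$, which is bounded by $(1-(1-\tfrac{t'}{n-m})^m)^\ell$ exactly as in Theorem~\ref{THM:FUZZYEXTOTCONST1}; or (ii) a \emph{wrong} subsample $i$ produces the all-zero prefix after XOR (probability $2^{-t}$ by the universal hash collision bound), \emph{and} the $Eval$ check erroneously passes on the wrongly recovered $R_1$. Given event (ii)'s prefix match, the recovered $R_1$ is a shifted version of the true $R_1$ by a known constant, so by the key-shift one-time MAC security (Lemma in Appendix~\ref{pf:lemmamac}) the spurious tag verifies with probability at most $L\cdot 2^{-\lambda}$. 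A union bound over the $\ell$ lockers yields the term $\ell\cdot 2^{-t}\cdot L\cdot 2^{-\lambda}$.

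For \emph{reusability}, I would extend the proof of Theorem~\ref{THM:FUZZYEXTOTCONST1}. Now the $Gen$ algorithm extracts $\nu=\xi+2\lambda+t$ bits per locker (since the plaintext inside the one-time-pad is $0^t|R|R_1$ with $|R_1|=2\lambda$). Applying Lemma~\ref{glhl} to the universal hash $H$ on the $(\alpha,m,N)$-source gives that each locker is within $\tfrac{\sigma}{4\ell}$ of uniform whenever $\xi+2\lambda+t\le \alpha+2-2\log(\tfrac{4\ell}{\sigma})$. The factor $4$ instead of $2$ absorbs the extra distinguishing paths opened by the MAC tag $T$. A standard hybrid over the $\ell$ lockers and over the $q_e+1$ queried samples — each of which remains conditionally $\alpha$-entropic because $(q_e+1)\ell m<N$ by assumption of the source — bounds the overall reusability advantage by $\sigma$.

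The main obstacle is \emph{strong robustness}, which I would prove via a game-hopping reduction. In the first hop, replace every pad $d_j^i=E(W^i[A_j],Z)$ used either in $Gen$ responses ($i\le q_e$) or inside the $\ms O_{rep}$ oracle (on $W^\eta$, across $q_d$ queries) by truly independent uniform strings; the cost is $(q_e+q_d)\ell\epsilon$ by the extractor guarantee and the $(\alpha,m,N)$-source property, provided $(q_e+q_d)\ell m<N$. In this idealized game, each $(R^i, R_1^i)$ for $i\le q_e$ is perfectly hidden by its pads, and $R_1^\eta$ is uniform and independent of the adversary's view except through $T^\eta$. Any successful forgery $\tilde p=(\tilde p_1,\ldots,\tilde p_\ell,\tilde T)\ne(p^\eta_1,\ldots,p^\eta_\ell,T^\eta)$ on $W^\eta$ must, for some locker index $j$, pass both the zero-prefix test and the $Eval$ check; writing $\delta_j=\tilde p_j\oplus p^\eta_j$, this is exactly a key-shift forgery against the one-time MAC of Appendix~\ref{pf:lemmamac} with known shift derived from the $p$-part, and the Lemma bounds success per locker by $2^{-\lambda}(L+1)$ (the extra $+1$ accounting for the zero-prefix event). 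Union-bounding over $\ell$ lockers and over the $q_d$ oracle queries gives $q_d\cdot 2^{-\lambda}\ell(L+1)$, yielding the claimed $\delta$. The delicate part will be ensuring that the reproduction oracle on $W^\eta$, which uses the same $R_1^\eta$ repeatedly against adversarially correlated $\tilde p^m$, does not leak $R_1^\eta$ beyond what a single one-time MAC forgery analysis handles; this is where the independence of the pads (after the first hop) together with the fact that $\ms O_{rep}$ only returns $\perp$ or a recovered key (never the internal $R_1^\eta$) lets us reduce each query cleanly to a single key-shift forgery attempt.
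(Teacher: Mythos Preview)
Your overall scaffolding --- correctness via the two failure modes, reusability via the Leftover Hash Lemma with the enlarged output $\nu=\xi+2\lambda+t$ and the additional hybrid over the tag (this is the paper's Lemma~\ref{le:5}), and robustness via a single game hop replacing all extractor outputs by uniform strings followed by a per-locker key-shift MAC analysis --- is exactly the route the paper takes.

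There is, however, a genuine conceptual slip in your robustness argument. You repeatedly refer to $p^\eta$, $T^\eta$, $R_1^\eta$ as though $Gen(w^\eta)$ were run and handed to the adversary; in the game of Definition~\ref{def:srrfuzzyext} (Figure~\ref{fig:gamesreusrob}, RHS) it is \emph{not}. Only $Gen(w^1),\ldots,Gen(w^{\eta-1})$ are executed; $w^\eta$ enters solely through $\ms O_{rep}$. Consequently your shift $\delta_j=\tilde p_j\oplus p_j^\eta$ is undefined. The paper instead anchors the key-shift to a \emph{generation} output on some $w^j$ ($j<\eta$): using linearity of $E$ and the assumption that $o=w^\eta-w^j$ is known, the key recovered by $Rep(w^\eta,\tilde p)$ at locker $i$ equals the generation's $R_1^j$ shifted by the last $2\lambda$ bits of $p_i^j\oplus\tilde p_i\oplus E(o[A_i],Z)$, and then Lemma~\ref{le: forge} applies with $(x,y)=R_1^j$ and the observed tag $T^j$. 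Your argument can be repaired along these lines, but as written it invokes the key-shift lemma against a nonexistent reference pair. Relatedly, your remark that the ``$+1$'' accounts for the zero-prefix event is not correct; the $L+1$ is the MAC bound the paper uses directly, and the zero-prefix check only tightens things further.

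Your final paragraph on handling the $q_d$ repeated $\ms O_{rep}$ queries is where you correctly sense the delicate point but hand-wave it. The paper's device is cleaner and worth adopting: define $E_k^*$ as the event that the $k$-th query is the first accepted one, observe that conditioned on the first $k-1$ being rejected the adversary could have simulated those rejections itself (game $\Gamma_2^3$), hence $P(E_k^*)=P(E_1^*)$, and then sum. This is what justifies the clean factor $q_d$ without leakage of $R_1$ accumulating across queries.
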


    The proof of this theorem is in Appendix~\ref{pf:THM:FUZZYEXTOTCONST2}.

\section{Concluding remarks}
We proposed an rFE and an srrFE, both with information theoretic security for a structured source.  Both FEs provide  security for strong notion of reusuability. The srrFE is the only known strongly robust and reusable IT-secure FE that uses CRS model.
\remove{
Construction  2  answers an open question (Canetti et al. \cite{Canetti2020}) in information theoretic setting.  All constructions are secure against an adversary with access to quantum computer.} 
Extending our FEs to 
more general sources are interesting open questions. 

{\footnotesize
\bibliographystyle{IEEEtran}
\bibliography{main}}
\normalsize
\appendix
\subsection{\textbf{Message Authentication Code}}\label{dfn:MACAppn}
\begin{definition}[one-time MAC] An  $(\mathcal{M},\mathcal{K},\mathcal{T},\delta)$-one-time MAC is a triple of polynomial-time algorithms $(\mathsf{gen},\mathsf{mac}, \mathsf{vrfy})$ where $\mathsf{gen}(\lambda): 1^\lambda \to  \mathcal{K}$ gives a key $k \in \mathcal{K}$ on input security parameter $\lambda$,  $\mathsf{mac}: \mathcal{K} \times \mathcal{M} \to \mathcal{T}$ takes the key $k$ and a message $m \in \mathbf{M}$ as input and outputs a tag $t \in \mathcal{T}$, $\mathsf{vrfy}: \mathcal{K} \times \mathcal{M} \times \mathcal{T} \to \{\mathsf{accept},\mathsf{reject}\}$ outputs either $\mathsf{accept}$ or $\mathsf{reject}$ on input: the key $k$, a message $m  \in \mathcal{M}$  and a tag  $t \in \mathcal{T}$.  
    
    The {\em correctness} property requires that: for any choice of $m\in \mathcal{M}$, we have 
 {   \footnotesize
 \begin{equation} 
\pr[ k  \leftarrow \mathsf{gen}(\lambda),\mathsf{vrfy}(k ,m,\mathsf{mac}(k ,m))= {\mathsf{accept}}] =1
\vspace{-.3em}
 \end{equation}
 }
 The unforgeability property requires that for any $k \leftarrow \mathsf{gen}(\lambda)$, the following two conditions are satisfied:
 
 (i) for any message and tag pair $m' \in \mathcal{M}$ and $t'  \in \mathcal{T}$,  
 \begin{equation}
  \pr[ \mathsf{vrfy}(k,m',t')=\mathsf{accept}] \le \delta, \\
  \vspace{-.3em}
 \end{equation}
 (ii) for any observed message and tag pair $m\in \mathcal{M}$ and tag pair $t \in \mathcal{T}$, for any adversary's choice of $m' \ne m \in \mathcal{M}$ and $t' \in \mathcal{T}$,
 \vspace{-.4em}
{\small
\begin{equation}
  \pr[ \mathsf{vrfy}(k,m',t')=\mathsf{accept}\text{ $|$ }
 m, \mathsf{mac}(k,m)=t] \le \delta, \\
 \vspace{-.3em}
 \end{equation}
 }.
\end{definition}

\subsection{\textbf{Security proof of  the MAC}}\label{pf:lemmamac}
\noindent The following lemma shows that the tag function has one-time authentication property.
\vspace{-.5em}
\begin{lemma}
Let $L=3$ mod 4 and  $m$ be an arbitrary but given   vector of length at most $L-5$ over $GF(2^\lambda).$  Let $x, y$ be uniformly random over $GF(2^\lambda).$ Then, given $T=x^L+x^2m(x)+xy$, the following holds 
{\footnotesize
\begin{equation}
(x+\delta_1)^L+(x+\delta_1)^2 m'(x+\delta_1)+(x+\delta_1)(y+\delta_2)=T' \label{eq: mm2'}
\end{equation}
}
 with probability at most $L2^{-\lambda}$, where $T', \delta_1, \delta_2\in GF(2^\lambda)$ and  $m'$  a vector over $GF(2^\lambda)$ of length at most $L-5$ with $m'\ne m$,  are arbitrary but all deterministic in $T$ and  the probability is over the choices of $(x, y)$.    \label{le: forge}
\end{lemma}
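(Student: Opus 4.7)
The plan is to reduce the forgery event to the vanishing of a nonzero univariate polynomial over $GF(2^\lambda)$ of degree at most $L-1$, and then bound its number of roots. Write the tag relation as $F_0(x,y) := x^L + x^2 m(x) + xy + T = 0$ and the forged relation as $F_1(x,y) := (x+\delta_1)^L + (x+\delta_1)^2 m'(x+\delta_1) + (x+\delta_1)(y+\delta_2) + T' = 0$ (signs are irrelevant in characteristic~$2$). Since $(T', \delta_1, \delta_2, m')$ are deterministic in $T$, I would condition on a fixed value $T = t$, note that for every $x \neq 0$ the tag relation uniquely specifies $y = (t + x^L + x^2 m(x))/x$, and count the satisfying $x$. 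The slice $x = 0$ is consistent only with $t = 0$, in which case $y$ is free; this slice is handled separately at the end.

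Substituting the above value of $y$ into $F_1$ and multiplying through by $x$ yields a polynomial $Q(x) \in GF(2^\lambda)[x]$. The only candidate terms of degree exceeding $L-2$ come from the two blocks $x(x+\delta_1)^L$ and $(x+\delta_1)x^L$; these contribute coefficient $1$ at $x^{L+1}$ and coefficient $\delta_1$ at $x^L$, so both top coefficients cancel in characteristic~$2$. The coefficient at $x^{L-1}$ comes only from $x(x+\delta_1)^L$, because the remaining blocks $x(x+\delta_1)^2 m'(x+\delta_1)$, $(x+\delta_1) x^2 m(x)$, $(x+\delta_1)t$, $(x+\delta_1)x\delta_2$, and $xT'$ each have degree at most $L-2$; its value is $\binom{L}{2} \delta_1^2$. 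The hypothesis $L \equiv 3 \pmod 4$ enters here: a direct parity computation gives $\binom{L}{2} = L(L-1)/2 \equiv 1 \pmod 2$, so this coefficient is exactly $\delta_1^2$.

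A two-case analysis on $\delta_1$ then finishes the main bound. If $\delta_1 \neq 0$, $Q$ has degree exactly $L-1$ and hence at most $L-1$ roots. If $\delta_1 = 0$, substituting $y$ into $F_1$ collapses the equation to $x^2(m'(x)+m(x)) + x\delta_2 + (t + T') = 0$; the hypothesis $m' \neq m$ makes $m'(x)+m(x)$ a nonzero polynomial of degree at most $L-5$, so this is a nonzero polynomial of degree at most $L-3$ and has at most $L-3$ roots. Thus for every fixed $t$, at most $L-1$ pairs $(x, y)$ with $x \neq 0$ simultaneously satisfy $F_0 = 0$ and $F_1 = 0$. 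The slice $x = 0$ (which forces $t = 0$) contributes at most $2^\lambda$ pairs in the worst case, namely when $\delta_1 = 0$ and $T' = 0$, since then $F_1$ is satisfied for every $y$. Summing these counts yields at most $(L-1) \cdot 2^\lambda + 2^\lambda = L \cdot 2^\lambda$ forgery pairs out of the $2^{2\lambda}$ uniformly chosen $(x,y)$, giving the bound $L \cdot 2^{-\lambda}$.

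The main obstacle will be the careful bookkeeping at the top three degrees of $Q$: confirming the cancellations at $x^{L+1}$ and $x^L$, verifying that no lower-order block secretly contributes to $x^{L-1}$, and checking that the parity identity $\binom{L}{2} \equiv 1 \pmod 2$ genuinely needs $L \equiv 3 \pmod 4$ rather than merely $L$ odd (for $L \equiv 1 \pmod 4$ the binomial is even and the $x^{L-1}$ coefficient vanishes, forcing a different or weaker argument).
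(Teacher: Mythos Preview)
Your proposal is correct and follows essentially the same route as the paper's proof: split off $x=0$, then for $x\neq 0$ eliminate $y$ and reduce the forgery event to the vanishing of a univariate polynomial in $x$ whose degree is $L-1$ or $L-3$ according to whether $\delta_1\neq 0$ or $\delta_1=0$, using $L\equiv 3\pmod 4$ to make $\binom{L}{2}\equiv 1$ so the $x^{L-1}$ coefficient is $\delta_1^2$. The only cosmetic difference is the direction of elimination: you solve the tag relation for $y$ (using $x\neq 0$) and substitute into the forgery relation, clearing denominators by multiplying through by $x$, whereas the paper solves the difference $T'-T$ for $y$ (using $\delta_1\neq 0$) and substitutes into the tag relation; both lead to the same degree-$(L-1)$ polynomial up to a nonzero scalar.
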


\begin{proof}
We write {\em $(m', T)$ valid} to indicate the event  that  Eq. (\ref{eq: mm2'}) is satisfied. Then, 
given $T$, the probability that  Eq. (\ref{eq: mm2'}) holds,  is  
\begin{align} 
\nonumber 
&\mathbb{E}_{T}\Big{(}P_{xy}[(m', T') \mbox { valid}  \mid  T]\Big{)} \\ 
\nonumber 
\leq &\sum_{a \in GF(2^\lambda)}P_{xy}[(m',T') \mbox{ valid},  T=a] \\ 
\nonumber 
\leq &\sum_{a}P_{xy}[(m',T') \mbox{ valid},  T=a, x=0] \\
\nonumber 
&+\sum_{a}P_{xy}[(m',T')  \mbox{ valid}, T=a, x \ne 0] \\ 
\leq &2^{-\lambda} + \sum_{a}P_{xy}[(m',T')  \mbox{ valid}, T=a, x \ne 0]  \label{eq: JSQ-0} 
  \end{align} 
Next, we bound  $\sum_{a}P_{xy}[(m',T')  \mbox{ valid}, T=a, x \ne 0]. $
  To  bound this, we study two cases:  case $\delta_1=0$ and case $\delta_1\ne 0. $ In the following, we  use the fact that  $T', m', \delta_1, \delta_2$ are determined by $T$.  Therefore, given $T=a$, $(T', m', \delta_1, \delta_2)$ are all fixed. 

{\bf Case $\delta_1=0$. } In this case, since $m'\ne m$, $T'-T=x^2(m'(x)-m(x))+x\delta_2$ is a non-zero polynomial of degree at most $L-3$ (when fix $T=a$). Hence, given $T,T'$, there are at most $L-3$ possible $x$ that satisfy this. Moreover, for any  $a\in GF(2^\lambda)$ and non-zero $x$, there exists a unique $y$ such that  $(x, y)$ results in  $T=a$. As a result, given $T=a\wedge x\ne 0,$
$x$ is uniformly random over $GF(2^\lambda)-\{0\}.$  Keeping these facts in mind,   we obtain  
{\footnotesize
\begin{align} 
\nonumber 
&P_{xy}[(m',T')  \mbox{ valid} \mid  x \ne 0, T=a, \delta_1=0]\cdot 
\Pr(x \ne 0, T=a, \delta_1=0) \\ 
\nonumber 
\leq & P_{xy}\big[(x^2(m'(x)-m(x))+x\delta_2=T'-a)  \mid  x \ne 0, T=a, \delta_1=0\big]\cdot \\\nonumber
&\qquad \Pr(x \ne 0, T=a, \delta_1=0) \\\nonumber  
\leq & P_{xy}\big[(x^2(m'(x)-m(x))+x\delta_2=T'-a)  \mid  x \ne 0, T=a \big]\cdot\\
&\Pr(x \ne 0, T=a, \delta_1=0) \label{eq: JSQ-1}\\  
\nonumber 
\leq & \frac{L-3}{2^\lambda-1} \Pr(x \ne 0, T=a, \delta_1=0),
\end{align} 
}
where Eq. (\ref{eq: JSQ-1}) follows from the fact that $\delta_1$ is determined by $T$.   Hence, 
\begin{align}\nonumber
&\sum_{a}P_{xy}[(m',T')  \mbox{ valid} , T=a, x \ne 0, \delta_1=0]\\
&\le \frac{L-3}{2^\lambda-1} \Pr(x \ne 0, \delta_1=0).  \label{eq: JSQ-2} 
\end{align}

{\bf Case $\delta_1\ne 0$. }  In this case, $T'-T=\delta_1x^{L-1}+\delta_1^2x^{L-2}+Q(x)+\delta_1y,$ where $Q(x)$ is some polynomial of degree at most $L-3$ (when fix $T=a$).  Here we use the fact:  since  $GF(2^\lambda)$ has character is 2 and $L=3$ mod 4, it follows that both $L$ and $(L-1)L/2$ are  1 in $GF(2^\lambda).$ Representing $y$ in terms of $x$ and substituting it into $T$, we have $a=T=\delta_1x^{L-1}+\mu(x)$ for some polynomial $\mu(x)$ of degree at most $L-3$ (when fix $T=a$). There are at most $L-1$ possible $x$ to satisfy this. Further, when  $T=a$ and $x$ are fixed with $x\ne 0$, there is a unique $y$ satisfying $T=a$. Hence, given $T=a$ and the fact $x\ne 0$, we know that $x$ is  uniformly random over $GF(2^\lambda)-\{0\}.$ Finally, we again remind that $T', m', \delta_1, \delta_2$ are determined by $T$ (hence, given $T=a$, $(T', m', \delta_1, \delta_2)$ are all fixed). With these facts in mind,   we have (similar to Case $\delta_1=0$)
{\footnotesize
\begin{align} 
\nonumber 
&P_{xy}[(m',T')  \mbox{ valid} \mid  x \ne 0, T=a, \delta_1\ne 0]\cdot \Pr(x \ne 0, T=a, \delta_1\ne 0) \\\nonumber 
\leq & P_{xy}\big[\delta_1x^{L-1}+\delta_1^2x^{L-2}+Q(x)+\delta_1y=T'-a  \mid  x \ne 0, T=a \big]\cdot \\
&\Pr(x \ne 0, T=a, \delta_1=0) \label{eq: JSQ-3}\\  
\nonumber 
\leq & \frac{L-1}{2^\lambda-1} \Pr(x \ne 0, T=a, \delta_1\ne 0),
\end{align} 
}
 Hence, 
\begin{align}\nonumber
&\sum_{a}P_{xy}[(m',T')  \mbox{ valid} , T=a, x \ne 0, \delta_1=0]\\
&\le \frac{L-1}{2^\lambda-1} \Pr(x \ne 0, \delta_1\ne 0).  \label{eq: JSQ-4} 
\end{align}
Combining Eq. (\ref{eq: JSQ-2})(\ref{eq: JSQ-4}), we have 
$\sum_{a}P_{xy}[(m',T')  \mbox{ valid} , T=a, x \ne 0]\le \frac{L-1}{2^\lambda-1} \Pr(x \ne 0).$  Notice $\Pr(x\ne 0)=\frac{2^\lambda-1}{2^\lambda}. $ 
We know that $\sum_{a}P_{xy}[(m',T')  \mbox{ valid} , T=a, x \ne 0]\le \frac{L-1}{2^\lambda}.$ Plugging in Eq. (\ref{eq: JSQ-0}), we obtain  our result. 
\end{proof}

\subsection{\textbf{Proof of Theorem~\ref{THM:FUZZYEXTOTCONST1}}}\label{pf:THM:FUZZYEXTOTCONST1}

Let us start the security analysis with some preparation results. 
Claim 1 and Lemmas \ref{le: dist}-\ref{le: d} are based on quite standard techniques of probabilistic distances. We give proofs in Appendix \ref{appendix: C} for completeness.

The following claim is a well-known  fact.  

\vspace{0.05in} \noindent {\bf Claim 1. }   Let $X$ and $Y$ be random variables over ${\cal Z}$.
Assume that  $F: {\cal Z}\rightarrow {\cal V}$ is a deterministic  function. Then
$\Delta(F(X); F(Y))\le \Delta(X; Y).$

\vspace{.10in} This claim is correct if $F$ is randomized as it holds for each fixed randomness. The following lemma can be proven by triangle inequality and induction on $\mu.$

\begin{lemma} Let $W$ be a $(\alpha, m, N)$-source of length $n$ over $\{0, 1\}$. Let $E: \{0, 1\}^m \times \{0, 1\}^r\rightarrow \{0, 1\}^\nu$ be an average  $(m, \alpha, \nu,   \epsilon)$-randomness extractor. Let $Z$ be a uniform $r$-bit string and  $A_i$ be uniformly random subset of $[n]$ of size $m$ for $i=1, \cdots, \mu.$ Let $d_i=E(W[A_i], Z)$ for $i=1, \cdots, \mu$. Assume  $E(\tilde{U}, Z)=U$, where $\tilde{U}$ is uniformly random over $\{0, 1\}^m$ and $U$ is the uniformly distributed over $\{0, 1\}^{\nu}$. If $\mu m<N,$ then,
$\Delta({\bf d}, Z, {\bf A}; U^\mu, Z, {\bf A})\le \mu\epsilon, $ where ${\bf A}=(A_1, \cdots, A_\mu)$ and  ${\bf d}=(d_1, \cdots, d_\mu).$  \label{le: dist}
\end{lemma}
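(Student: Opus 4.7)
The plan is to proceed by induction on $\mu$, exactly as the authors' hint suggests, combining the average-case extractor guarantee with a standard hybrid argument. The base case $\mu = 1$ is immediate from the $(\alpha, m, N)$-source definition: picking any index set $B$ of size $N$ disjoint from $A_1$ (possible since $m \le m < N \le n$), the guarantee $\tilde{H}_\infty(W[A_1] \mid W[B], A_1, B) \ge \alpha$ together with monotonicity of conditional min-entropy under weakening the conditioning yields $\tilde{H}_\infty(W[A_1] \mid A_1) \ge \alpha$, and the average-case extractor property then gives $\Delta(d_1, Z, A_1; U, Z, A_1) \le \epsilon$.

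For the inductive step, assume the lemma for $\mu - 1$ and introduce the hybrid $H = (U_1, d_2, \ldots, d_\mu, Z, \mathbf{A})$, in which only the first coordinate $d_1$ is swapped for an independent uniform $U_1 \in \{0,1\}^\nu$. By the triangle inequality,
\begin{equation*}
\Delta(\mathbf{d}, Z, \mathbf{A};\, U^\mu, Z, \mathbf{A}) \le \Delta(\mathbf{d}, Z, \mathbf{A};\, H) + \Delta(H;\, U^\mu, Z, \mathbf{A}).
\end{equation*}
The second term is bounded by $(\mu - 1)\epsilon$ via the induction hypothesis applied to $(d_2, \ldots, d_\mu)$, since attaching the independent uniform coordinate $U_1$ cannot affect the statistical distance (formally, Claim 1 applied to the coordinate-dropping map).

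For the first term, the central task is to certify that $W[A_1]$ still has conditional average min-entropy at least $\alpha$ given the side information $(Z, \mathbf{A}, W[A_2], \ldots, W[A_\mu])$. Set $B = A_2 \cup \cdots \cup A_\mu$, so $|B| \le (\mu - 1)m < N$, and extend $B$ to a set $\hat{B}$ of size exactly $N$ using fresh random indices independent of $W$. The source property gives $\tilde{H}_\infty(W[A_1] \mid W[\hat{B}], A_1, \hat{B}) \ge \alpha$, and dropping the conditioning on $\hat{B} \setminus B$ only increases the entropy; noting that $Z$ and $\mathbf{A}$ are independent of $W$, this yields $\tilde{H}_\infty(W[A_1] \mid W[A_2], \ldots, W[A_\mu], \mathbf{A}, Z) \ge \alpha$. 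The average-case extractor guarantee then bounds by $\epsilon$ the distance between $(d_1, Z, \mathbf{A}, W[A_2], \ldots, W[A_\mu])$ and $(U_1, Z, \mathbf{A}, W[A_2], \ldots, W[A_\mu])$. Finally, since each $d_i$ is a deterministic function of $(W[A_i], Z)$, Claim 1 applied to the map that replaces $W[A_i]$ with $d_i$ for $i \ge 2$ bounds the first term by $\epsilon$, completing the induction with total bound $\mu\epsilon$.

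The only delicate point, and the step I expect to be the main obstacle, is invoking the $(\alpha, m, N)$-source property when $B$ is strictly smaller than $N$ and may overlap $A_1$. This is exactly where the hypothesis $\mu m < N$ is used: it allows padding $B$ up to size $N$ with indices drawn from $[n] \setminus \mathbf{A}$ and then weakening the conditioning back down, so the source guarantee transfers cleanly to the conditional entropy that the extractor actually sees.
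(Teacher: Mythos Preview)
Your proof is correct and follows essentially the same approach as the paper's: induction on $\mu$, triangle inequality with a one-coordinate hybrid, then Claim~1 to pass from the raw subsamples $W[A_i]$ to the extracted values $d_i$, with the $(\alpha,m,N)$-source property supplying the conditional min-entropy bound. The only cosmetic differences are that the paper swaps out the last coordinate $d_k$ rather than $d_1$, and that you spell out the padding-to-size-$N$ argument that the paper leaves implicit; one minor slip is that for the second term you should invoke Claim~1 via the randomized \emph{coordinate-adding} map (appending an independent $U_1$ and $A_1$), not the dropping map, to get the inequality in the right direction.
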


\begin{lemma} Let $W=W_1, \cdots, W_n$ be an $(\alpha,  m, N)$-source of length $n$ over alphabet $\{0, 1\}$. Let $E: \{0, 1\}^m \times \{0, 1\}^r\rightarrow \{0, 1\}^\nu$ be an average  $(m, \alpha, \nu,   \epsilon$)-randomness extractor.
Let $A_i$ be uniformly random subset of $[n]$ of size $m$ for $i=1, \cdots, \ell$ and $Z$ is a uniform $r$-bit string. 
Let $p_i=E(W[A_i], Z)\oplus 0^t|S$ for $S\leftarrow \{0, 1\}^{\nu-t}$, $i=1, \cdots, \ell$.
Then,
$
\Delta(S, {\bf p}, Z, {\bf A}; U, {\bf p}, Z, {\bf A})\le 2\ell \epsilon,$
where ${\bf p}=(p_1, p_2, \cdots, p_\ell), {\bf A}=(A_1, \cdots, A_\ell)$, $\ell m<N$  and $U\leftarrow \{0, 1\}^{\nu-t}$. \label{le: d}
\end{lemma}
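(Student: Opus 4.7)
The plan is a two-step hybrid argument that exploits the one-time-pad structure of $p_i=E(W[A_i],Z)\oplus(0^t|S)$. I would introduce the hybrid $(S,\tilde{\bf p},Z,{\bf A})$ in which the real extractor outputs $d_i=E(W[A_i],Z)$ are replaced by independent uniform strings $V_i\in\{0,1\}^\nu$, and $\tilde{p}_i=V_i\oplus(0^t|S)$. Lemma \ref{le: dist} applied with $\mu=\ell$ (the hypothesis $\ell m<N$ is precisely its requirement) gives $\Delta(({\bf d},Z,{\bf A});(V_1,\ldots,V_\ell,Z,{\bf A}))\le\ell\epsilon$. Since $S$ is sampled independently of both tuples and has the same marginal on both sides, adjoining $S$ preserves this distance. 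Applying Claim 1 to the deterministic map
\[F(\delta_1,\ldots,\delta_\ell,z,{\bf a},s)=\bigl(s,\,\delta_1\oplus(0^t|s),\,\ldots,\,\delta_\ell\oplus(0^t|s),\,z,\,{\bf a}\bigr)\]
then yields $\Delta\bigl((S,{\bf p},Z,{\bf A});(S,\tilde{\bf p},Z,{\bf A})\bigr)\le\ell\epsilon$.

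Next, I would verify that in the hybrid $S$ is genuinely independent of $(\tilde{\bf p},Z,{\bf A})$: for any fixed $s$, the $V_i\oplus(0^t|s)$ are mutually independent and uniform on $\{0,1\}^\nu$, while $Z$ and ${\bf A}$ are independent of $S$ by construction, so the conditional distribution of $(\tilde{\bf p},Z,{\bf A})$ given $S$ does not depend on $S$. Consequently $(S,\tilde{\bf p},Z,{\bf A})$ and $(U,\tilde{\bf p},Z,{\bf A})$ are identically distributed when $U$ is a fresh uniform $(\nu-t)$-bit string independent of everything else, so $\Delta\bigl((S,\tilde{\bf p},Z,{\bf A});(U,\tilde{\bf p},Z,{\bf A})\bigr)=0$.

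The third leg of the hybrid mirrors the first: defining $F'(\delta_1,\ldots,\delta_\ell,z,{\bf a},s,u)=(u,\delta_1\oplus(0^t|s),\ldots,\delta_\ell\oplus(0^t|s),z,{\bf a})$ and applying Claim 1 (now with the fresh $U$ appended to both input tuples alongside $S$, neither of which affects the statistical distance) yields $\Delta\bigl((U,\tilde{\bf p},Z,{\bf A});(U,{\bf p},Z,{\bf A})\bigr)\le\ell\epsilon$. Chaining the three bounds with the triangle inequality produces $2\ell\epsilon$ as required. The main subtlety will be the bookkeeping of independence across the hybrid: one must carefully distinguish the secret $S$ that genuinely participates in the construction of ${\bf p}$ from the fresh $U$ that serves only as a formal placeholder in the target distribution, and verify that both adjoining these independent variables to the bound of Lemma \ref{le: dist} and the deterministic post-processing via Claim 1 preserve the statistical distance.
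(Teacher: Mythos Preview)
Your proposal is correct and follows essentially the same approach as the paper: both arguments pass through the intermediate distribution in which the extractor outputs $d_i=E(W[A_i],Z)$ are replaced by independent uniform strings, invoke Lemma~\ref{le: dist} twice (once for each leg), and use Claim~1 for post-processing. Your hybrid presentation is arguably cleaner---you directly observe that $S$ becomes independent of $\tilde{\bf p}$ once the pads are uniform, whereas the paper first algebraically rewrites $\Delta(S,{\bf p},C;U,{\bf p},C)$ as $\Delta(P_{{\bf XY}C};P_{{\bf X},{\bf Y}\oplus V,C})$ and then triangles through $P_{{\bf U}_1{\bf U}_2 C}$---but the underlying idea and the resulting bound are identical.
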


We now prove the construction~\ref{const1} is a reusable fuzzy extractor satisfying definition~\ref{def:rfuzzyext}. 


To prove that the construction is a reusable FE, we need to prove the $\epsilon'$-correctness according  to Definition~\ref{def:fuzzyext} and then reusability satisfying Definition~\ref{def:rfuzzyext}. 

\textit{Correctness.}
    We first consider correctness error. That is, when there is no attack, the ciphertext will be accepted with high probability.  
Note that the maximum Hamming distance between $W$ and $W'$ is $t'$ i.e. $d(W,W') \le t'$. Observe that for any $i$, \\
$Pr[(W'_{j_{i,1}},W'_{j_{i,2}},\cdots,W'_{j_{i,m}})=(W_{j_{i,1}},W_{j_{i,2}},\cdots,W_{j_{i,m}})] \\ \ge (1 - t'/(n-m))^m$, where $A_i=\{i_1, \cdots, i_m\}=\{j_{i,1},j_{i,2},\cdots,j_{i,m}\}$ and $1 \leq j_{i,1},j_{i,2},\cdots,j_{i,m} \leq n$. This holds true since  $Pr[W'_{j_{i,1}}=W_{j_{i,1}}] \ge (1-t'/n)$ as ${j_{i,1}}$ is uniform. Next, given $W'_{j_{i,1}}=W_{j_{i,1}}$,  the probability that  $W'_{j_{i,2}}=W_{j_{i,2}}$ is at least $1-t'/(n-1)$. We thus have  \\$Pr[(W'_{j_{i,1}},W'_{j_{i,2}},\cdots,W'_{j_{i,m}})=(W_{j_{i,1}},W_{j_{i,2}},\cdots, W_{j_{i,m}})]\\ \ge (1-t'/n)(1-t'/(n-1))\cdots(1-t'/(n-m)) \\ \ge (1 - t'/(n-m))^m$. Consequently, the probability that none of the $v_i$'s matches at the receiver is at most $(1-(1-\frac{t'}{n-m})^m)^\ell$, where $v_i=(W'_{j_{i,1}},W'_{j_{i,2}},\cdots,W'_{j_{i,m}})$, and $i \in \{1,2,\cdots,\ell\}$.

In addition, $Rep$ in figure~\ref{fig:const1} may return an incorrect extracted key due to an error in step $(ii)$. This error might occur due to a collision that we now explain in detail. A collision at step $(ii)$ may occur with probability $(\frac{1}{2^{t}})$, assuming $E$ is implemented as a function from a 
universal hash family.  
Moreover, $Rep$ may return an incorrect extracted key if there is at least one collision at step $(ii)$.
This may happen with any of the $\ell$ executions of step $(ii)$ and hence with probability at most 
$\ell \cdot 2^{-t}$.

Since the fuzzy extractor's allowable error parameter is $\epsilon'$, we need to choose $\ell$ so that 

\begin{equation}
(1-(1-\frac{t'}{n-m})^m)^\ell + \ell \cdot 2^{-t} \leq \epsilon'.
\end{equation}

\textit{Reusability.} We first prove the case when $\eta=1$, and  then  extend it to the case when  $\eta > 1$.

{\bf Case $\eta=1$}. We need to prove that ${\rm \Delta}((R, P), (U_{\xi}; P)) \le  \sigma$, where $P$ is the public strings.

In step $(i)$ of  Gen() algorithm described in figure~\ref{fig:const1}, we sample a random subset $A_i=\{i_1, \cdots, i_m\}$ from $[n]$. In step $(ii)$, $E$ is implemented by $H$.

Since $W$ is a $(\alpha, m, N)$-sample source, we have $\tilde{H}_\infty(W[A] \mid W[B], A, B)\ge \alpha$, where $A$ (resp. $B$) is a purely random subset of $[n]$ of size $m$ (resp. $N$).


Since $H:\mathcal{X} \times \mathcal{S} \rightarrow \{0,1\}^\nu$ is a  
universal hash family, from lemma~\ref{glhl}, if $\nu \leq \alpha + 2 - 2\log(\frac{2\ell}{\sigma})$, $H$ is an $(m,\alpha,\nu,\frac{\sigma}{2\ell})$-extractor.  Now, $\nu = \xi + t$. Thus, if $\xi \leq \alpha + 2 - 2\log(\frac{2\ell}{\sigma}) - t$, $H$ is an $(m,\alpha,\nu,\frac{\sigma}{2\ell})$-extractor. 


Since $H$ is an $(m,\alpha,\nu,\frac{\sigma}{2\ell})$-extractor, from lemma~\ref{le: dist}, \ref{le: d}, if $\xi \leq \alpha + 2 - 2\log(\frac{2\ell}{\sigma}) - t$ and $\ell m<N$, considering $S=R$ in lemma~\ref{le: d},   we have
\begin{equation}
\label{sdadvconst11}
\Delta(R, {\bf p}, Z, {\bf A}; U, {\bf p}, Z, {\bf A})\le \sigma.
\end{equation}

Therefore, $(Gen,Rep)$ is a $(\mathcal{V}^n,\mathcal{W},\xi,t',\epsilon',\sigma)$-fuzzy extractor, where  $(1-(1-\frac{t'}{n-m})^m)^\ell + \ell \cdot 2^{-t} \leq \epsilon'$, $\ell m<N$ and $\mathcal{V} \in \{0,1\}$.
Furthermore, the length of the extracted key will be $\xi \leq \alpha + 2 - 2\log(\frac{2\ell}{\sigma}) - t$. 

{\bf Case $\eta>1$}. 
 In response to a query to $Gen$ oracle (i.e. generation oracle query), the oracle returns a pair of key $r^i$ and ciphertext $c^i$ to the adversary, where $r^i=R^i$ and $c^i=(p_1|A_1,\cdots, p_\ell|A_\ell)^i$ according to the $Gen(W^i)$\remove{MOD.$iK_{RFE}.Enc(W)$} procedure described in\remove{MOD.Algorithm~\ref{alg:fuzzygeneration1}} figure~\ref{fig:const1}. Now since $W$ is a source with $(\alpha,m,N)$-samples, in each query to generation\remove{encapsulation} oracle, $Gen(W^i)$\remove{MOD.$iK_{RFE}.Enc(W)$} procedure runs with new samples with conditional entropy $\alpha$. Hence, the uncertainty about the new samples remains the same before and after  $\eta$\remove{MOD.$q_e$} queries to the generation\remove{encapsulation} oracle from adversary's perspective. Therefore, the entropy of the new samples remains same. Now proceeding in similar manner as the proof for the {\bf case $\eta=1$},   
 we can prove that, if \\$\xi \leq \alpha + 2 - 2 \cdot \log(\frac{2 \ell}{\sigma}) - t$\remove{MOD.$\xi \leq \alpha + 2 - 2 \cdot \log(\frac{4\cdot \ell}{\sigma}) - t - 2\lambda$}, then the $(Gen,Rep)$\remove{MOD.iKEM $iK_{RFE}$} described in figure~\ref{fig:const1}\remove{MOD.Algorithm~\ref{alg:fuzzygeneration1} and \ref{alg:fuzzyreproduction1}} is $(\eta,\sigma)$-reusable $(\mathcal{V}^n,\mathcal{W},\xi,t',\epsilon',\sigma)$-fuzzy extractor, where \\  
$(1-(1-\frac{t'}{n-m})^m)^\ell + \ell \cdot 2^{-t} \leq \epsilon'$ and $\ell \eta m<N$. 
$\hfill\qed$

\subsection{\textbf{Proof of Theorem~\ref{THM:FUZZYEXTOTCONST2}}}\label{pf:THM:FUZZYEXTOTCONST2}
Lemma~\ref{le:5} is based on quite standard techniques of probabilistic distances. We give its proof in Appendix \ref{appendix: C} for completeness.

\begin{lemma} $\Delta(R, {\bf p}, Z, {\bf A}, T; U, {\bf p}, Z, {\bf A}, T)\le 4\ell \epsilon$. \label{le:5}
\end{lemma}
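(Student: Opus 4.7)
The plan is to derive the stated bound from Lemma~\ref{le: d} by two applications of Claim~1 (data processing for statistical distance) followed by a single triangle inequality. The key observation is that the tag $T$ is a \emph{deterministic} function of $(\mathbf{p},R_1)$ via the public map $\mathsf{Eval}(\cdot,\cdot,L)$, so the entire analysis can be reduced to bounds on tuples that do not involve $T$ explicitly.

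First, I would invoke Lemma~\ref{le: d} taking the secret $S$ to be the concatenation $R|R_1\in\{0,1\}^{\nu-t}$, which matches the form $p_i=E(W[A_i],Z)\oplus 0^t|S$ used in $Gen$ of Construction~\ref{const2:cca}. This yields
\[
\Delta\bigl((R,R_1,\mathbf{p},Z,\mathbf{A});(U,V,\mathbf{p},Z,\mathbf{A})\bigr)\le 2\ell\epsilon,
\]
where $U\in\{0,1\}^\xi$ and $V\in\{0,1\}^{2\lambda}$ are independent uniform and independent of $(\mathbf{p},Z,\mathbf{A})$. Let $T'=\mathsf{Eval}(\mathbf{p},V,L)$ denote the ``hybrid'' tag obtained by evaluating on the fresh uniform $V$ instead of the real $R_1$.

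Next, I would apply Claim~1 to the deterministic function
\[
f(r,r_1,\mathbf{p},Z,\mathbf{A})=\bigl(r,\mathbf{p},Z,\mathbf{A},\mathsf{Eval}(\mathbf{p},r_1,L)\bigr),
\]
which transports the two tuples above to $(R,\mathbf{p},Z,\mathbf{A},T)$ and $(U,\mathbf{p},Z,\mathbf{A},T')$ respectively, yielding
\[
\Delta\bigl((R,\mathbf{p},Z,\mathbf{A},T);(U,\mathbf{p},Z,\mathbf{A},T')\bigr)\le 2\ell\epsilon. \quad (\ast)
\]
Separately, I would marginalize $R$ from the starting inequality (again by Claim~1, via projection), obtaining $\Delta((R_1,\mathbf{p},Z,\mathbf{A});(V,\mathbf{p},Z,\mathbf{A}))\le 2\ell\epsilon$, and then apply Claim~1 once more with the function $(r_1,\mathbf{p},Z,\mathbf{A})\mapsto (\mathbf{p},Z,\mathbf{A},\mathsf{Eval}(\mathbf{p},r_1,L))$ and prepend an independent uniform $U$ on both sides (which leaves statistical distance unchanged), giving
\[
\Delta\bigl((U,\mathbf{p},Z,\mathbf{A},T);(U,\mathbf{p},Z,\mathbf{A},T')\bigr)\le 2\ell\epsilon. \quad (\ast\ast)
\]

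Finally, the triangle inequality applied to $(\ast)$ and $(\ast\ast)$, pivoting through the common intermediate distribution $(U,\mathbf{p},Z,\mathbf{A},T')$, gives
\[
\Delta\bigl((R,\mathbf{p},Z,\mathbf{A},T);(U,\mathbf{p},Z,\mathbf{A},T)\bigr)\le 2\ell\epsilon+2\ell\epsilon=4\ell\epsilon,
\]
as claimed. The main (and only) subtlety I expect is bookkeeping around the role of $\mathbf{p}$: in the target right-hand side the tag $T$ is computed from the \emph{real} $R_1$ (hence correlated with $\mathbf{p}$), whereas the natural output of the data-processing step uses the hybrid $T'$ computed from a fresh uniform $V$; passing through the intermediate hybrid and invoking the $R_1$-marginal of Lemma~\ref{le: d} is exactly what makes the triangle inequality cleanly absorb this discrepancy, at the cost of doubling the $2\ell\epsilon$ bound.
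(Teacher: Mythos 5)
Your proposal is correct and follows essentially the same route as the paper's proof: invoke Lemma~\ref{le: d} with $S=R|R_1$, push both tuples through the deterministic tag map via Claim~1 to reach the hybrid $(U,\mathbf{p},Z,\mathbf{A},T')$ with $T'$ computed from a fresh uniform key, and close with the triangle inequality, each leg costing $2\ell\epsilon$. The paper's version of your step $(\ast\ast)$ just drops the independent $U$ first and then applies Claim~1 to the tag map, which is the same bookkeeping you describe.
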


We now prove the construction~\ref{const2:cca} is a strongly robust and reusable fuzzy extractor satisfying definition~\ref{def:srrfuzzyext}.


We need to prove the $\epsilon'$-correctness according  to Definition~\ref{def:fuzzyext} and then reusability satisfying Definition~\ref{def:rfuzzyext} and finally prove robustness.

{\em Correctness. } We first consider correctness error. 
Proceeding in the same way as the proof of Theorem~\ref{THM:FUZZYEXTOTCONST1}, we can prove that the probability that no $v_i$ matches at the receiver is at most $(1-(1-\frac{t'}{n-m})^m)^\ell$, where $v_i=(W'_{j_{i,1}},W'_{j_{i,2}},\cdots,W'_{j_{i,m}})$, and $i \in \{1,2,\cdots,\ell\}$.

In addition, $Rep$ may be incorrect due to an error in step $(ii)$ and verification of $T'=T$. These errors might occur due to collision. A collision at step $(ii)$ might occur with probability $(\frac{1}{2^{t}})$, assuming $E$ is implemented as a function from a 
universal hash family. In addition, similar to lemma~\ref{le: forge} (except $\delta_1|\delta_2=(d_i\oplus d_i')_{\nu-2\lambda+1\cdots \nu}$ with $d_i$ from $W$ and $d'_i$ from $W'$), a collision $T'=T$ occurs with probability $L2^{-\lambda}$. Now, $Rep$ may be incorrect if there is at least one collision at step $(ii)$ together with a collision $T'=T$. This may happen with any of the $\ell$ iterations and hence with probability at most 
$\ell \cdot 2^{-t} \cdot L\cdot 2^{-\lambda}$. 

Since the fuzzy extractor's allowable error parameter is $\epsilon'$, we need to choose $\ell$  so that 
\begin{equation}
(1-(1-\frac{t'}{n-m})^m)^\ell + \ell \cdot 2^{-t} \cdot L\cdot 2^{-\lambda} \leq \epsilon'.    
\end{equation}

\textit{Reusability.} Let $\eta=q_e+1$. We first prove the case when $\eta=1$, and  then  extend it to the case when  $\eta > 1$.

{\bf Case $\eta=1$}.   We need to show that \\${\rm \Delta}((R, P), (U_{\xi}; P)) \le  \sigma$, where $P$ is the public strings. In step $(i)$ of  $Gen$ algorithm of Figure~\ref{fig:const12}
, a random subset $A_i=\{i_1, \cdots, i_m\}$ is sampled  from $[n]$, and in step $(ii)$, $E$ is implemented by $H$.
Note that $W$ is a $(\alpha, m, N)$-sample source. Hence, we have $\tilde{H}_\infty(W[A] \mid W[B], A, B)\ge \alpha$, where $A$ (resp. $B$) is a purely random subset of $[n]$ of size $m$ (resp. $N$).


Since $H:\mathcal{X} \times \mathcal{S} \rightarrow \{0,1\}^\nu$ is a  
universal hash family, from lemma~\ref{glhl}, if $\nu \leq \alpha + 2 - 2\log(\frac{4\ell}{\sigma})$, $H$ is an $(m,\alpha,\nu,\frac{\sigma}{4\ell})$-extractor.  Now, $\nu = 2\lambda + \xi + t$. Thus, if $\xi \leq \alpha + 2 - 2\log(\frac{4\ell}{\sigma}) - t - 2\lambda$, $H$ is an $(m,\alpha,\nu,\frac{\sigma}{2\ell})$-extractor. 


Since $H$ is an $(m,\alpha,\nu,\frac{\sigma}{4\ell}$-extractor, from lemma~\ref{le: dist}, \ref{le: d} and \ref{le:5}, if $\xi \leq \alpha + 2 - 2\log(\frac{4\ell}{\sigma}) - t - 2\lambda$ and $\ell m<N$, we have
\begin{equation}
\label{sdadvconst2}
\Delta(R, {\bf p}, Z, {\bf A}, T; U, {\bf p}, Z, {\bf A}, T)\le \sigma.
\end{equation}

Therefore, $(Gen,Rep)$ is a $(\mathcal{V}^n,\mathcal{W},\xi,t',\epsilon',\sigma)$-fuzzy extractor, where  \\$(1-(1-\frac{t'}{n-m})^m)^\ell + \ell \cdot 2^{-t} \cdot L \cdot 2^{-\lambda} \leq \epsilon'$, $\ell m<N$ and $\mathcal{V} \in \{0,1\}$.
Furthermore, the length of the extracted key will be $\xi \leq \alpha + 2 - 2\log(\frac{4\ell}{\sigma}) - t - 2\lambda$.

{\bf Case $\eta  > 1$}.
The proof follows similar arguments of {\it reusability} part of the proof of Theorem~\ref{THM:FUZZYEXTOTCONST1} and the {\it security} arguments described above.
In response to a query to $Gen$ oracle, the oracle returns a pair of key $r^i$ and ciphertext $c^i$ to the adversary, where $r^i=R^i$ and $c=(p_1,\cdots, p_\ell, T)$ according to the $Gen(W^i)$\remove{MOD.$iK_{RFE}.Enc(W)$} procedure described in
 Figure~\ref{fig:const12}. As $W$ is a source with $(\alpha,m,N)$-samples, in each $Gen$ oracle query,  $Gen(W^i)$\remove{$iK_{RFE}.Enc(W)$} procedure runs with new samples with conditional entropy $\alpha$.\remove{MOD., where $\tilde{H}_\infty(W[A] \mid W[B], A, B)\ge \alpha$ and $A$ (resp. $B$) a purely random subset of $[n]$ of size $m$ (resp. $N$).} Hence, the uncertainty about the new samples remains the same before and after $\eta$ $Gen$ oracle (or generation oracle) queries from the adversary's perspective. Therefore, the entropy of each new samples remains same. Now proceeding in similar manner as the proof for the {\bf  `case $\eta=1$'}, 
 we can prove that, if $\xi \leq \alpha + 2 - 2 \cdot \log(\frac{4\cdot \ell}{\sigma}) - t - 2\lambda$, then the $(Gen,Rep)$\remove{MOD.iKEM $iK_{RFE}$} described in Figure~\ref{fig:const12}
 is $(\eta,\sigma)$-reusable $(\mathcal{V}^n,\mathcal{W},\xi,t',\epsilon',\sigma)$-fuzzy extractor, where \\
$(1-(1-\frac{t'}{n-m})^m)^\ell + \ell \cdot 2^{-t} \cdot L\cdot 2^{-\lambda} \leq \epsilon'$ and $\ell \eta m<N$. 
\remove{$\epsilon'$-correct and $\sigma$-IND-$q_e$-CEA secure, where  $(1-(1-\frac{t'}{n-m})^m)^\ell + \ell \cdot 2^{-t} \cdot \frac{L+1}{2^\lambda} \leq \epsilon'$.} 

{\em Robustness. } Robustness proof is given in Appendix \ref{pf:intcxtx} 
$\qed$

\subsection{\textbf{Robustness Proof for  Theorem  \ref{THM:FUZZYEXTOTCONST2}}}\label{pf:intcxtx}
We need to prove our scheme satisfies Definition \ref{def:srrfuzzyext}.  We show that through $q_e$ generation  queries and $q_d$ reproduction  queries, the robustness  is broken negligibly. Assume ${\cal A}$ is the robustness  attacker. Upon generation   and reproduction  queries, challenger acts normally. Denote this game $\Gamma_0.$  We now revise $\Gamma_0$ to $\Gamma_1$ so that $d_i=E(W[A_i], Z)$ is replaced by $d_i\leftarrow \{0, 1\}^{\nu}$ in generation  query or reproduction  query (for some $A_i$ that does not appear in a previous query; otherwise, use the existing $d_i$).    Let {\bf Succ}$(\Gamma)$ be the success event of ${\cal A}$ in game $\Gamma.$
Let $d_i$ be taken from $d_i=E(W[A_i], Z)$ or $d_i\leftarrow \{0, 1\}^{\nu}$, for all $i.$ Then, we consider a distinguisher that distinguishes $\Gamma_0$ and $\Gamma_1$. Upon receiving $d_i$ that is either $d_i=E(W[A_i], Z)$ or uniformly random, challenger prepares the public sampling randomness $r$ that results in sampling $A_i$'s for all $i$'s in the challenge. Now challenger simulates $\Gamma_0$ with ${\cal A}$ against it by providing ${\cal A}$ with $Z$ and public randomness $r$, except all $d_i$'s are from his challenge tuple. Finally, if ${\cal A}$ succeeds, output $0$; otherwise, output 1. The simulated game is a randomized function with input ${\bf d}_1, \cdots, {\bf d}_{q_d+q_e}$ and a binary output 0 or 1,  where ${\bf d}_i$ is the vector of $(d_1, \cdots, d_\ell)$ in the generation  or a reproduction  query (with new $A_i$'s).   Denote this function by $G({\bf d}).$ Since $A_1, \cdots, A_\ell$ by challenger or adversary is from public random string, $W[A_i]$ follows the distribution of $(\alpha, m, N)$-source $W$.  By claim 1 and Lemma \ref{le: dist}, we immediately have the following.

\begin{lemma} $|P({\bf Succ}(\Gamma_0))-P({\bf Succ}(\Gamma_1))|\le (q_d+q_e)\ell\epsilon.$ \label{le: gap01}
\end{lemma}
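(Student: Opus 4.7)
The plan is to read off this lemma directly from the distinguisher construction $G(\mathbf{d})$ set up in the paragraph just above, combined with Claim~1 and Lemma~\ref{le: dist}. The key observation is that $G$ has been designed so that, depending on the distribution of its input tuple $\mathbf{d}$, it perfectly simulates $\Gamma_0$ or $\Gamma_1$: when each coordinate of $\mathbf{d}$ is drawn as $E(W[A_i],Z)$, the simulation is identically distributed to $\Gamma_0$, so the probability that $G$ outputs $0$ equals $\Pr[{\bf Succ}(\Gamma_0)]$; when each coordinate is uniformly random over $\{0,1\}^\nu$, the simulation is identically distributed to $\Gamma_1$, so $\Pr[G=0]$ equals $\Pr[{\bf Succ}(\Gamma_1)]$. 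Subtracting gives
\[
|P({\bf Succ}(\Gamma_0))-P({\bf Succ}(\Gamma_1))|
\;=\;\big|\Pr[G(\mathbf{d}_{\mathrm{ext}})=0]-\Pr[G(\mathbf{d}_{\mathrm{unif}})=0]\big|
\;\leq\;\Delta\!\big(G(\mathbf{d}_{\mathrm{ext}}),\,G(\mathbf{d}_{\mathrm{unif}})\big).
\]

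Next I would invoke Claim~1 (the data-processing inequality for statistical distance, applied with $G$ as the randomized function, valid because it holds for every fixed choice of $G$'s internal randomness and hence in expectation) to push the distance to the inputs:
\[
\Delta\!\big(G(\mathbf{d}_{\mathrm{ext}}),\,G(\mathbf{d}_{\mathrm{unif}})\big)
\;\leq\;\Delta(\mathbf{d}_{\mathrm{ext}};\mathbf{d}_{\mathrm{unif}}).
\]
Here $\mathbf{d}_{\mathrm{ext}}$ denotes the concatenation of all \emph{fresh} extractor evaluations across the $q_e$ generation queries and the $q_d$ reproduction queries, together with their associated subsets; re-used $A_i$'s contribute no new randomness and can be fixed deterministically on both sides. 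The total number of fresh invocations is therefore at most $\mu=(q_e+q_d)\ell$, which by the theorem's hypothesis satisfies $\mu m<N$, so the $(\alpha,m,N)$-source condition applies.

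Finally I would apply Lemma~\ref{le: dist} with this value of $\mu$: since $E$ is an average $(m,\alpha,\nu,\epsilon)$-randomness extractor and $A_1,\ldots,A_\mu$ are drawn from the public random string independently of $W$, the lemma yields
\[
\Delta(\mathbf{d}_{\mathrm{ext}};\mathbf{d}_{\mathrm{unif}})\;\leq\;\mu\,\epsilon\;=\;(q_e+q_d)\ell\,\epsilon,
\]
which chained with the two previous inequalities gives the claimed bound.

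The only genuinely delicate point, and hence the step I would take most care with, is the bookkeeping that justifies treating the $A_i$'s as uniformly random subsets independent of $W$ in the reduction: the CRS model lets the challenger sample the public randomness $r$ (and therefore every $A_i$ used across all queries) independently of $W$, and the adversary's choice of tampered helper data in the robustness game is a function only of what it has seen so far, so invoking Lemma~\ref{le: dist} on the joint tuple of all fresh $(W[A_i],A_i)$ pairs is legitimate. Everything else is routine: the distinguishing reduction is information-theoretic, so no additional loss arises.
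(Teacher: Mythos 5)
Your proposal is correct and follows essentially the same route as the paper: build the distinguisher $G(\mathbf{d})$ that simulates the robustness game from its input tuple, apply Claim~1 (data processing for statistical distance, valid for randomized $G$ by fixing its coins) to reduce to the distance between the input distributions, and invoke Lemma~\ref{le: dist} with $\mu=(q_d+q_e)\ell$ under the hypothesis $(q_e+q_d)\ell m<N$. Your added bookkeeping about reused $A_i$'s and the CRS-independence of the $A_i$'s from $W$ only makes explicit what the paper states in passing.
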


Now we consider the success event {\bf Succ}$(\Gamma_1)$. We assume ${\cal A}$ will not query the output of generation  query to the reproduction  oracle as ${\cal A}$ already knows the answer. Consider the $i$th reproduction  query $\{A_i|p_i\}_{i=1}^\ell|T$. Let bit $E_i$ be the decision bit for the reproduction  query (0 for reject and 1 for success). We modify $\Gamma_1$ to $\Gamma_2$ such that if upon the first $E_i=1$, then stop the security game. Obviously, $P({\bf Succ}(\Gamma_1))=P({\bf Succ}(\Gamma_2))$. Let $E_i^*$ be the event $E_i=1$ while $E_j=0$ for $j<i.$ So $P({\bf Succ}(\Gamma_2))\le \sum_{i=1}^{q_d}P(E_i^*).$  We first bound $P(E_1^*).$

\vspace{0.10in} Let  this first reproduction  query be   $C_1=(A'_1, p'_1, \cdots, A'_\ell, p'_\ell, T')$, where $A_1',\cdots, A'_\ell$ are from public randomness $r$ (note:  they could be previously sampled in  the generation query processing).
Hence, $d_1, \cdots, d_\ell$ for this query are uniformly random (as we consider  $\Gamma_2$).  For simplicity, some generation  query has generated ciphertext  $C_1'$ using the same sample set $A'_1, \cdots, A'_\ell$ (otherwise, the proof will be similar and simpler). Assume  $d_1, \cdots, d_\ell$ generated  $p_1, \cdots, p_\ell, T$ for that generation  query.  We bound the probability that $T'$ generated by ${\cal A}$ is valid. Notice that each generation  query uses independent $d_i$'s and hence can be simulated by ${\cal A}$. So we can assume that ${\cal A}$ only issue one generation  query (which uses $A_1', \cdots, A_\ell'$). Denote this by $\Gamma_2^1$. Let the hash output be $T$ for the generation  query. Since the reproduction  query can not use the same ciphertext, assume $p_i'\ne p_i$. Hence, let $x|y=R_1$ in $p_i$ and let $\delta_1|\delta_2$ be the last $2\lambda$ bits in $p_i\oplus p_i'\oplus E(o[A_i], A),$ where $o=W'-W$ is assumed to be known to ${\cal A}$ (this will only increase the success probability of ${\cal A}$).    Then, the decrypted tag key from $C_1$ using $W'$ at verifier will be \\$(x+\delta_1)|(y+\delta_2)$. Hence, if $T'$ is valid, we know that $T=x^L+x^2m(x)+xy$ and $T'=(x+\delta_1)^L+(x+\delta_1)^2m'(x+\delta_1)+(x+\delta_1)(y+\delta_2)$, with $L=\lceil\nu \ell/\lambda\rceil+4$. Since now $x, y$ are uniformly random $\lambda$ bits and are independent of $(A_1|p_1|\cdots|p_\ell|A_\ell)$ due to the one-time pads $d_1, \cdots, d_\ell$) (hence idependent of the encoded vector ${\bf m}$),   by Lemma \ref{le: forge}, conditional on $T,$ the probability that  $T'$ is valid is  at most $2^{-\lambda}(L+1)$. Further, there are at most $\ell$ possible $i$. Thus, $P(E_1^*)\le 2^{-\lambda}\ell(L+1).$

Now we consider $P(E_k^*).$ To evaluate this, we consider a variant $\Gamma_2^3$ of $\Gamma_2$ where the $t$th reprodution  query for $t<k$ is decided as reject (without verifying the tag $T$ in its query). Let $\Sigma$ be the randomness of $\Gamma_2$. Then, if $\Sigma$ leads to reject for all previous $k-1$ reproduction  query, then adversary view in $\Gamma_2^3$ and $\Gamma_2$ is identical; otherwise, some {\bf reject} of some $t$th reproduction  query is wrong. But in this case,  $E_k^*$  will not occur. It follows $P(E_k^*(\Gamma_2))=P(E_k^*(\Gamma_2^3)).$ On the other hand, since the previous $k-1$ reproduction  query always results in reject, it can be simulated by ${\cal A}$ himself. Hence, $P(E_k^*(\Gamma_2^3))=P(E_1^*).$
It follows that $P({\bf Succ}(\Gamma_2))\le q_dP(E_1^*)\le q_d2^{-\lambda}\ell (L+1).$  From Lemma \ref{le: gap01} and $P({\bf Succ}(\Gamma_1))=P({\bf Succ}(\Gamma_2))$, it follows that $P({\bf Succ}(\Gamma_0))\le (q_d+q_e)\ell\epsilon+q_d2^{-\lambda}\ell (L+1).$ $\hfill\square$

\subsection{\textbf{Proofs of Basic Results in Section~\ref{construction:const1}} } \label{appendix: C} 
\noindent{\bf Proof of Claim 1. }  By calculation, we have

$
\begin{array}{ll}
&{\Delta}(F(X); F(Y))\\
=& \frac{1}{2} \sum_{v\in {\cal V}} |\Pr[F(X)=v]-\Pr[F(Y)=v]|\\
=& \frac{1}{2} \sum_{v\in {\cal V}} |\sum_{u: F(u)=v} (\Pr[X=u]-\Pr[Y=u])|\\
\le & \frac{1}{2} \sum_{v\in {\cal V}}\sum_{u: F(u)=v}|\Pr[X=u]-\Pr[Y=u]|\\
=& {\Delta}(X; Y). \hspace{2.30in}\hfill\square  
\end{array}
$ 

\vspace{.10in} \noindent{\bf Proof of Lemma} \ref{le: dist}.  Use induction. When $\mu=1$, it holds from assumption and Claim 1. If it holds for $\mu=k-1$, consider case $\mu=k.$ By assumption, $E(\tilde{U}, Z)=U.$ Note that $d_i=E(W[A_i], Z).$ We have
{\footnotesize
\begin{align*}
&\Delta({\bf d}, Z, {\bf A}; U^k, Z, {\bf A})\\
\le& \Delta(d_k, {\bf d}^{k-1}, Z, {\bf A}; U, {\bf d}^{k-1}, Z, {\bf A})+ \\
&\Delta(U, {\bf d}^{k-1}, Z, {\bf A}; U, U^{k-1}, Z, {\bf A})\\
\le & \Delta(d_k, \{W[A_i]\}_1^{k-1}, Z, {\bf A}; U, \{W[A_i]\}_1^{k-1}, Z, {\bf A})+\\
&\Delta({\bf d}^{k-1}, Z, {\bf A}; U^{k-1}, Z, {\bf A})\\
\le& \epsilon+(k-1)\epsilon=k\epsilon,
\end{align*}
}
  where the 2nd inequality follows from Claim 1; the first part of  the last inequality  follows from the definition of $E$ and $\tilde{H}_\infty(W[A_k]\mid \{W[A_i]\}_1^{k-1}, {\bf A})\ge \alpha$ (from definition of $W$); the second part follows from the induction assumption with the fact that $A_k$ is independent of the remaining random variable.  $\hfill\square$
  

\vspace{.10in} \noindent{\bf Proof of Lemma} \ref{le: d}.  Let $d_i=E(W[A_i], Z)$. Denote $d_i=X_i|Y_i$ with $X_i$ the first $t$ bits and $Y_i$ the remaining bits of $d_i$.  Let $U_1, U_2$ be uniformly random variables in the domain of $X_1, Y_1$ respectively. Let ${\bf U}_1, {\bf U}_2$ be $\ell$ iid copies of $U_1, U_2$ respectively. For vector ${\bf v}$ and variable $\alpha$,
${\bf v}\oplus \alpha$ denotes $(v_1\oplus \alpha, \cdots, v_\ell\oplus \alpha).$ For simplicity, let $C=(Z, {\bf A})$.
\begin{align*}
&\Delta({S, {\bf p}, C}; U, {\bf p}, C)\\
&=\Delta(S, {\bf d}, C; U, {\bf d}', C), \mbox{ where } {\bf d}'={\bf d}\oplus 0^t|S\oplus 0^t|U\\
&=\Delta(P_{S {\bf d} C}; P_UP_{{\bf d}' C})\\
&= \Delta(P_{{\bf d} C}; P_{{\bf d}' C}),  \mbox{ (as $S$ is independent of $({\bf d}, C)$  and } \\
&\quad \mbox{ distributed as $U$)}  \\
&= \Delta(P_{{\bf XY}C}; P_{{\bf X}, {\bf Y}\oplus V, C}), \mbox{ (let $V\stackrel{def}{=}U\oplus S$))}\\
&\le \Delta(P_{{\bf XY}C}; P_{{\bf U}_1 {\bf U}_2 C})+\Delta(P_{{\bf U_1U_2}C}; P_{{\bf X}, {\bf Y}\oplus V, C})\\
&\le \ell \epsilon+\Delta(P_{{\bf U_1U_2}C}; P_{{\bf X}, {\bf Y}\oplus V, C}), \mbox{ (Lemma \ref{le: dist})} \\
&\le \ell \epsilon+ \Delta(P_{{\bf U_1U_2}CV}; P_{{\bf X}, {\bf Y}\oplus V, C, V})\\
&= \ell \epsilon+ \Delta(P_{{\bf U_1,U_2}-V, C, V}; P_{{\bf X}, {\bf Y}, C, V})\\
&= \ell \epsilon+ \Delta(P_{{\bf U_1U_2} C V}; P_{{\bf X} {\bf Y} C V})\\
&\quad \mbox{ (${\bf U}_2$ and ${\bf U}_2-V$ are identical, independent of the  }  \\
&\quad \mbox{ remaining variables)}\\
&= \ell \epsilon+ \Delta(P_{{\bf U_1 U_2} C}; P_{{\bf X} {\bf Y} C}), \mbox{ ($V$ is independent of the } \\
&\quad \mbox{ remaining variables)}\\
&\le 2\ell \epsilon, \mbox{ (Lemma \ref{le: dist})   $\hfill\square$} 
\end{align*}

\vspace{.10in} \noindent{\bf Proof of Lemma} \ref{le:5}. \quad  In Lemma \ref{le: d}, $S=R|R_1$ in our scheme. Thus,
$$\Delta(R|R_1, {\bf p}, Z, {\bf A}; U|U_1, {\bf p}, Z, {\bf A})\le 2\ell \epsilon,$$ where $U$ (resp. $U_1$) is  uniform $\xi$-bit (resp.  $2\lambda$-bit). Let $C=({\bf p}, Z, {\bf A}).$ By  claim 1, let $F(R|R_1, C)=(R, C,  T).$ Then,
$\Delta(R, C, T; U, C, U')\le 2\ell\epsilon,$
where $U'=tag(U_1, C)$ and $tag(\cdot)$ is the tag algorithm in our scheme used to compute  $T$.
Notice that
{\small
\begin{align*}
&\Delta(R, C, T; U, C, T)\\
\le & \Delta(R, C, T; U, C, U')+\Delta(U, C, U'; U, C, T)\\
\le & \Delta(R|R_1, C; U|U_1, C)+\Delta(U, C, U'; U, C, T), \mbox{ (Claim 1)}\\
\le & 2\ell \epsilon+\Delta(C, U'; C, T), \mbox{ (Lemma \ref{le: d};  $U$ is ind of ${\bf p}$ and $T, U'$)} \\
\le & 2\ell\epsilon+\Delta(U_1, C; R_1, C),\quad  \mbox{ (Claim 1)} \\
\le &4\ell \epsilon,\quad  \mbox{ (Claim 1 and Lemma \ref{le: d})}
\end{align*}
}
This completes our proof.  $\hfill\square$

\subsection{Figure for construction~\ref{const2:cca}}\label{appendix_fig_ccaITconst}

Figure~\ref{fig:fuzzyext} depicts pictorial representation of construction~\ref{const2:cca}.
\begin{figure}[!ht]%
\footnotesize
    \centering
\subfigure[\label{fig:gen1}]{\includegraphics[width=5.6cm]{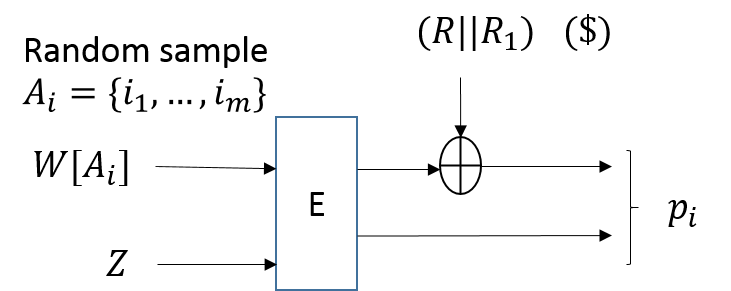}}%
    \quad 
    \subfigure[\label{fig:gen2}]{{\includegraphics[width=5.6cm]{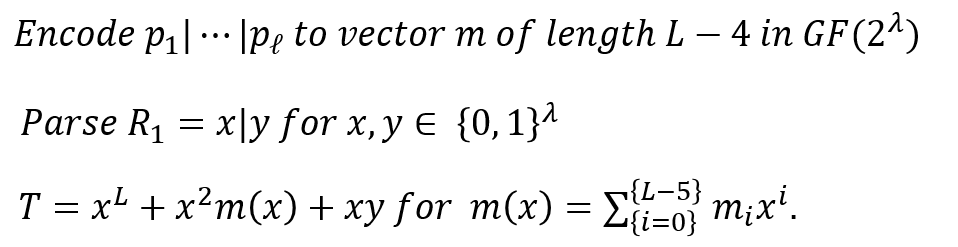} }}%
    \quad 
    \subfigure[\label{fig:rep1}]{{\includegraphics[width=5.6cm]{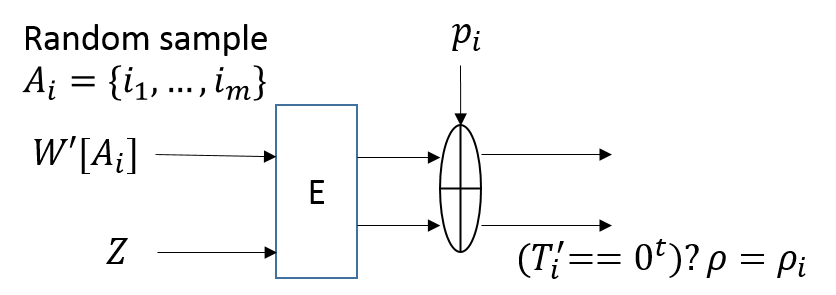} }}%
    \quad 
    \subfigure[\label{fig:rep2}]{{\includegraphics[width=5.6cm]{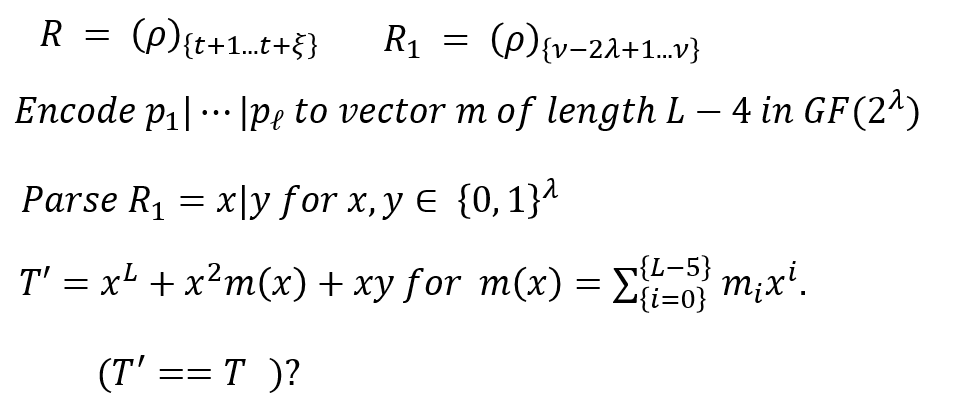} }}%
    \caption{\footnotesize High level diagram of fuzzy extractor construction~\ref{const2:cca}. Both $Gen(W)$ and $Rep(W')$ procedure are split into two parts. \ref{fig:gen1}: The first part of $Gen(W)$. This part iterates $\ell$ times. \ref{fig:gen2}: The second part of $Gen(W)$. This step is executed after completion of $\ell$ iterations of \ref{fig:gen1}. \ref{fig:rep1}: The First part of $Rep(W')$. This step is executed at most $\ell$ times until one match $(T'_i==0^t)$ is found. \ref{fig:rep2}: The second part of $Rep(W')$. This part is executed after one match in \ref{fig:rep1}. If verification of $(T'==T)$ fails, the algorithm continues from the part \ref{fig:rep1} again. The upper line of each extractor  represents higher order $\xi+2\lambda$  bits of its output i.e. $(E(.))_{t+1...\nu}$. The lower line of each extractor represents lower order $t$ bits of its output i.e. $(E(.))_{1...t}$. Extracted key is $R$. Randomness $Z$ is public}%
    \label{fig:fuzzyext}%
\end{figure}

\end{document}